\documentclass[letterpaper, 10 pt, conference]{ieeeconf}  
\usepackage{algorithm}
\usepackage{algorithmicx}
\usepackage{stfloats}
\usepackage{graphicx}          
\usepackage{mathrsfs}         
\usepackage{amsmath}   
\usepackage{amsfonts}    
\usepackage{amssymb}   
\usepackage{booktabs}
\usepackage{threeparttable}
\usepackage{subfigure}
\usepackage{tabularx}
\usepackage{comment}
\usepackage[dvipsnames]{xcolor}
\usepackage{color}

\usepackage[font=small,skip=0pt]{caption}

\usepackage{enumerate}

\usepackage{amsthm}

\newtheorem{remark}{\textbf{Remark}}
\newtheorem{definition}{Definition}
\newtheorem{assumption}{\emph{\textbf{Assumption}}}
\newtheorem{theorem}{\textbf{Theorem}}

\newtheorem{corollary}{Corollary}

\newtheorem{example}{Example}

\captionsetup[table]{font=small,skip=0pt}
\DeclareMathOperator{\rank}{rank}
\DeclareMathOperator{\diag}{diag}

\DeclareMathOperator{\tr}{tr} 
\DeclareMathOperator{\im}{im} 

\IEEEoverridecommandlockouts                   
\overrideIEEEmargins                                    

\title{\LARGE \bf
	Decomposability and Parallel Computation of Multi-Agent LQR
}

\author{Gangshan Jing,~He Bai,~Jemin George~and~Aranya Chakrabortty
	\thanks{G.~Jing and A. Chakrabortty are with  North Carolina State University, Raleigh, NC 27695, USA.
		{\tt\small \{gjing, achakra2\}@ncsu.edu}}%
	\thanks{H.~Bai is with Oklahoma State University, Stillwater, OK 74078, USA.
		{\tt\small he.bai@okstate.edu}}%
	\thanks{J.~George is with the US Army Research Laboratory, Adelphi, MD 20783, USA.
		{\tt\small jemin.george.civ@mail.mil}}%
}

\begin{document}

	\maketitle
	\thispagestyle{empty}
	\pagestyle{empty}

	\begin{abstract}
		Individual agents in a multi-agent system (MAS) may have decoupled open-loop dynamics, but a cooperative control objective usually results in coupled closed-loop dynamics thereby making the control design computationally expensive. The computation time becomes even higher when a learning strategy such as reinforcement learning (RL) needs to be applied to deal with the situation when the agents dynamics are not known. To resolve this problem, we propose a parallel RL scheme for a linear quadratic regulator (LQR) design in a continuous-time linear MAS. The idea is to exploit the structural properties of two graphs embedded in the $Q$ and $R$ weighting matrices in the LQR objective to define an orthogonal transformation that can convert the original LQR design to multiple decoupled smaller-sized LQR designs. We show that if the MAS is homogeneous then this decomposition retains closed-loop optimality. Conditions for decomposability, an algorithm for constructing the transformation matrix, a parallel RL algorithm, and robustness analysis when the design is applied to non-homogeneous MAS are presented. Simulations show that the proposed approach can guarantee significant speed-up in learning without any loss in the cumulative value of the LQR cost.
	\end{abstract}

	\begin{keywords}
		Reinforcement learning, linear quadratic regulator, multi-agent systems, decomposition. 
	\end{keywords}

	\section{Introduction}
	Optimal control of multi-agent systems (MASs) has a long-standing literature in both model-based \cite{Borrelli08}-\cite{Xue16}, and model-free \cite{Vrabie09}-\cite{Jing20} settings. One common challenge in both problems, however, is the high computational cost of the control design that often stems from the large size of typical MASs. Individual agents in the MAS may have decoupled open-loop dynamics, but the control objective is usually cooperative in nature which results in coupled closed-loop dynamics thereby making the optimal control design large-sized. The problem becomes even more critical when the controller needs to be {\it learned} in real-time using, for example, learning strategies such as reinforcement learning (RL) during situations when the agents dynamics are not known \cite{Jiang12,Lewis12}.
	
	In this paper, we propose a parallel computation scheme for infinite-horizon linear quadratic regulator (LQR) optimal control of continuous-time homogeneous MAS to resolve the issue of high computation time. The fundamental idea is to define the $Q$ and $R$ matrices of the LQR objective function over a set of communication graphs $G_1$ and $G_2$, respectively, and find a transformation matrix based on the structure of these two graphs. By utilizing this matrix, the original large-size LQR problem is equivalently converted to multiple decoupled smaller-size LQR problems. The transformation matrix itself is structured in the sense that it partitions the agents into a discrete set of non-overlapping groups, where each group solves a decoupled small-size LQR problem. Due to the decoupling, all of these designs can be run independently and in parallel, thereby saving significant amounts of computational effort and time. The design holds for both model-based and model-free LQR. For the sake of this paper, we only focus on the model-free case, and develop a RL learning strategy that is compatible with the reduced-dimensional LQR designs. 
	
	Dimensionality reduction has been used in the past on many occasions to improve the computational efficiency of optimal control, in both model-based \cite{Nguyen16,Xue16} and model-free \cite{Mukherjee18}-\cite{TCNS} settings, but the main difference between these approaches and our proposed approach is that the former designs all result in sub-optimal controllers while our controller, when the MAS is homogeneous, retains the optimality of the original LQR problem. We identify specific conditions on the cost function to establish this optimality. Another important difference is that in conventional designs dimensionality reduction usually happens due to grouping, time-scale separation, or spatial-scale separation in the agent dynamics. In our design, however, the reduction occurs due to the structure imposed on the control objective, not due to the plant dynamics. 
	
	Our results are presented in the following way. We first establish the notion of decomposability for optimal control of homogeneous MAS using Definition \ref{define decompose}, followed by the derivation of multiple sufficient conditions for decomposability in Theorems \ref{th SD} to \ref{th commute}. An algorithm (Algorithm \ref{alg:1}) for constructing the transformation matrix is proposed, and a corresponding hierarchical RL algorithm (Algorithm \ref{alg:2}) is developed to parallelize the control design in situations where the agent models may not be known. Finally, a detailed robustness analysis of the parallel controller is presented to encompass stability, performance, and implementation challenges when this controller is applied to a non-homogeneous MAS. 
	
	The rest of the paper is organized as follows.  Section \ref{sec: problem} introduces the main problem formulation and defines decomposability. Section \ref{sec: conditions} presents multiple conditions for decomposability and proposes Algorithm \ref{alg:1} for construction of the transformation matrix. Section \ref{sec: RL} develops the parallel RL algorithm. Section \ref{sec: robustness} presents robustness analysis. Section \ref{sec: simulation} shows a simulation example. Section \ref{sec: conclusion} concludes the paper.

	\textbf{Notation}: Throughout the paper, $\mathcal{G}=(\mathcal{V},\mathcal{E})$ denotes an unweighted undirected graph with $N$ vertices, where $\mathcal{V}=\{1,...,N\}$ is the set of vertices, $\mathcal{E}\subset\mathcal{V}\times\mathcal{V}$ is the set of edges; Graph $\mathcal{G}$ is said to be disconnected if there are two nodes $i$ and $j$ with no path between them, i.e., there does not exist a sequence of distinct edges of the form $(i_1,i_2)$, $(i_2,i_3)$, ..., $(i_{r-1},i_r)$ where $i_1=i$ and $i_r=j$. The $d\times d$ identity matrix is denoted by $I_d$.  The Kronecker product is denoted by $\otimes$. Given a matrix $X$, $X\succeq0$ implies that $X$ is positive semi-definite;  $\im(X)$ denotes the image space of $X$. We use $\diag\{A_1,...,A_N\}$ to denote a block diagonal matrix with $A_i$'s on the diagonal. Matrix $G_{ab}(s)$ denotes the transfer function from input $a$ to output $b$. Given a transfer function $G(s)$, its $\mathcal{H}_\infty$ norm is $||G(s)||_\infty=\sup_\omega\sigma_{\max}(G(j\omega))$, its $\mathcal{H}_2$ norm is $||G(s)||_2=(\frac{1}{2\pi}\int_{-\infty}^{\infty}\tr(G^\top(j\omega)G(j\omega))d\omega)^{1/2}$. The Euclidean norm is denoted by $||\cdot||$.

	\section{Problem Statement}\label{sec: problem}
	Consider the following linear homogeneous MAS:
	\begin{equation}\label{MAS}
	\dot{x}_i=Ax_i+Bu_i, ~~~~i=1,..., N
	\end{equation}
	where $x_i\in\mathbb{R}^n$ and $u_i\in\mathbb{R}^m$ denote the state and the control input of agent $i$, respectively. Throughout this paper, we assume that $A$ and $B$ are unknown, but the dimension of $x$ and $u$ are known. Let $x=(x_1^{\top},...,x_N^{\top})^{\top}\in\mathbb{R}^{nN}$ and $u=(u_1^{\top},...,u_N^{\top})\in\mathbb{R}^{mN}$, the optimal control problem to be solved in our paper is formulated as
	\begin{equation}\label{original}
	\begin{split}
	\min_u&~~J(x(0),u)=\int_0^\infty (x^{\top}Qx+u^{\top}Ru)dt\\
	\text{s.t.} &~~ \dot{x}=(I_N\otimes A)x+(I_N\otimes B)u.
	\end{split}
	\end{equation}
	where,
	\begin{equation}\label{QR hom}
	Q=G_1\otimes Q_0,~~~~R=G_2\otimes R_0.
	\end{equation}
	The two graphs $G_1\succeq0\in\mathbb{R}^{N\times N}$ and $G_2\succ0\in\mathbb{R}^{N\times N}$  characterize the couplings between the different agents in their desired transient cooperative behavior, with $Q_0\succ0\in\mathbb{R}^{n\times n}$ and  $R_0\succ0\in\mathbb{R}^{m\times m}$. 
	
	RL algorithms for solving LQR control in the absence of $A$ and $B$ have been introduced in \cite{Jiang12,Lewis12}. However, naively applying these algorithms to a large-size network of agents would involve repeated inversions of large matrices, making the overall design computational  expensive. Depending on the values of $n$ and $N$, the learning time in that case can become unacceptably high. To resolve this problem, in the following we will study the situations when the LQR control problem can be equivalently decomposed into multiple smaller-size decoupled LQR problems. For this, we define the notion of {\it decomposability} of  (\ref{original}) as follows.
	
	\begin{definition}\label{define decompose}
		Problem (\ref{original}) is said to be decomposable if there exist $r>1$ functions $J_i(\xi_i(0),v_i)$ such that 
		\begin{equation}\label{decomposeJ}
		J(x(0),u)=\sum_{i=1}^rJ_i(\xi_i(0),v_i),
		\end{equation}
		\begin{equation}\label{block J}
		J_i(\xi_i(0),v_i)=\int_0^\infty(\xi_i^{\top}Q_i\xi_i+v_i^{\top}R_iv_i)dt,
		\end{equation}
		and
		\begin{equation}\label{xi dynamics}
		\dot{\xi}_i=(I_{N_i}\otimes A)\xi_i+(I_{N_i}\otimes B)v_i, ~~i=1,...,r, 
		\end{equation}
		where $Q_i\in\mathbb{R}^{nN_i\times nN_i}$, $R_i\in\mathbb{R}^{mN_i\times mN_i}$, $\xi_i\in\mathbb{R}^{nN_i}$ and $v_i\in\mathbb{R}^{mN_i}$, $\sum_{i=1}^rN_i=N$.
	\end{definition}
	
	\begin{definition}
		Problem (\ref{original}) is said to be completely decomposable if it is decomposable with $r=N$.
	\end{definition}
	
	Let $\xi=(\xi_1^{\top},..., \xi_r^{\top})\in\mathbb{R}^{Nn}$. It is observed from Definition \ref{define decompose} that the dynamics of $\xi$ is identical to $x$. In fact, $\xi_i$ is a vector stacking up states of $N_i$ agents, and therefore can be viewed as the state vector of a cluster containing partial agents in the whole group. When problem (\ref{original}) is completely decomposable, each agent is viewed as a cluster.
	
	\section{Recognition of Decomposable Optimal Control Problems}\label{sec: conditions}
	
	In this section, we propose several conditions for decomposability of the problem (\ref{original}), followed by an algorithm to construct a transformation matrix for decomposition. The algorithm can also be used to identify if a given problem is decomposable.
	
	\subsection{Conditions for Decomposability}
	We start by defining {\it simultaneously block-diagonalizability} of two matrices. 
	
	\begin{definition}\label{de sbd}
		Two matrices $X\in\mathbb{R}^{N\times N}$ and $Y\in\mathbb{R}^{N\times N}$ are  simultaneously block-diagonalizable with respect to a disconnected graph $\mathcal{G}$ if there exists an orthogonal matrix $T$ such that $TXT^{-1}$ and $TYT^{-1}$ are both block-diagonal, and $TXT^{-1},TYT^{-1}\in \mathcal{S}(\mathcal{G})$ for graph $\mathcal{G}=(\mathcal{V},\mathcal{E}_\mathcal{G})$. Here $$\mathcal{S}(\mathcal{G})\triangleq\{M\in\mathbb{R}^{N\times N}: M_{ij}=0 \text{ if } (i,j)\notin\mathcal{E}_{\mathcal{G}}\}$$
		denotes the set of matrices with sparsity patterns similar to that of the graph $\mathcal{G}$.
	\end{definition}

	Since a disconnected graph always has multiple independent connected components, each block on the diagonal of $TXT^{-1}$ and $TYT^{-1}$ corresponds to one or multiple independent connected components in $\mathcal{G}$. Moreover, due to the property of block-diagonal matrices, graph $\mathcal{G}$ in Definition \ref{de sbd} must be undirected and may have self-loops. The following theorem presents a condition for decomposability of (\ref{original}).
	
	\begin{theorem}\label{th SD}
		Problem (\ref{original}) is decomposable if $G_1$ and $G_2$ are simultaneously block-diagonalizable with respect to some graph $\mathcal{G}$.
	\end{theorem}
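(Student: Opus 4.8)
The plan is to build the decomposition directly from the orthogonal matrix $T$ supplied by Definition \ref{de sbd}, lifted to the full state space by a Kronecker product with an identity. Concretely, I would introduce the change of variables $\xi=(T\otimes I_n)x$ and $v=(T\otimes I_m)u$. Since $\mathcal{G}$ is disconnected it has $r>1$ connected components, of sizes $N_1,\dots,N_r$ with $\sum_{i=1}^r N_i=N$; after ordering the vertices so that each component is contiguous, these sizes induce the block partition $\xi=(\xi_1^\top,\dots,\xi_r^\top)^\top$ and $v=(v_1^\top,\dots,v_r^\top)^\top$ with $\xi_i\in\mathbb{R}^{nN_i}$, $v_i\in\mathbb{R}^{mN_i}$.

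First I would verify that the lifted transformation preserves the homogeneous dynamics. Using the mixed-product rule, $(T\otimes I_n)(I_N\otimes A)=(I_N\otimes A)(T\otimes I_n)$ and $(T\otimes I_n)(I_N\otimes B)=(I_N\otimes B)(T\otimes I_m)$, so differentiating $\xi=(T\otimes I_n)x$ and substituting the constraint in (\ref{original}) gives $\dot\xi=(I_N\otimes A)\xi+(I_N\otimes B)v$. Restricting this identity to the $i$-th block yields exactly (\ref{xi dynamics}), because $I_N\otimes A$ acts on block $i$ as $I_{N_i}\otimes A$ (homogeneity is precisely what makes the lifted map act blockwise with the same $A$ and $B$). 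The initial conditions $\xi_i(0)$ are the corresponding blocks of $(T\otimes I_n)x(0)$.

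Next I would transform the cost. Using $T^{-1}=T^\top$ and the mixed-product rule, $x^\top Q x=\xi^\top\big((TG_1T^{-1})\otimes Q_0\big)\xi$ and $u^\top R u=v^\top\big((TG_2T^{-1})\otimes R_0\big)v$. By hypothesis both $TG_1T^{-1}$ and $TG_2T^{-1}$ lie in $\mathcal{S}(\mathcal{G})$ and are block-diagonal, hence both are block-diagonal with respect to the \emph{same} partition into connected components of $\mathcal{G}$; write $TG_1T^{-1}=\diag\{\bar G_1^{1},\dots,\bar G_1^{r}\}$ and $TG_2T^{-1}=\diag\{\bar G_2^{1},\dots,\bar G_2^{r}\}$. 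Then $TG_1T^{-1}\otimes Q_0$ and $TG_2T^{-1}\otimes R_0$ are block-diagonal as well, so the integrand splits as $\sum_{i=1}^r\big(\xi_i^\top(\bar G_1^{i}\otimes Q_0)\xi_i+v_i^\top(\bar G_2^{i}\otimes R_0)v_i\big)$. Setting $Q_i=\bar G_1^{i}\otimes Q_0$ and $R_i=\bar G_2^{i}\otimes R_0$ then gives (\ref{decomposeJ})–(\ref{block J}). I would also note that $\bar G_1^{i}\succeq0$ and $\bar G_2^{i}\succ0$, being principal submatrices of the orthogonally congruent matrices $TG_1T^\top$ and $TG_2T^\top$, so together with $Q_0,R_0\succ0$ each reduced problem inherits the same well-posedness structure ($Q_i\succeq0$, $R_i\succ0$) as (\ref{original}).

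The only place where the argument could genuinely fail is the requirement that the block-diagonalizations of $G_1$ and $G_2$ be aligned — same $T$, same index partition; this is exactly what Definition \ref{de sbd} guarantees by demanding a single orthogonal $T$ and membership of both transformed matrices in $\mathcal{S}(\mathcal{G})$ for a common disconnected $\mathcal{G}$, so the connected components of $\mathcal{G}$ furnish a partition that simultaneously block-diagonalizes both. Everything else is bookkeeping with Kronecker identities and orthogonality. A secondary remark worth including is that $u\mapsto v=(T\otimes I_m)u$ is a bijection, so minimizing $J$ over $u$ is equivalent to minimizing $\sum_{i=1}^r J_i$ over $(v_1,\dots,v_r)$; since the summands involve disjoint variables and decoupled dynamics, this reduces to $r$ independent smaller-size LQR problems.
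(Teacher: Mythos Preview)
Your proposal is correct and follows essentially the same approach as the paper: introduce $\xi=(T\otimes I_n)x$, $v=(T\otimes I_m)u$, use the mixed-product rule to show the dynamics are preserved, and use orthogonality of $T$ together with the block-diagonal forms $TG_iT^{-1}$ to split the integrand into the $r$ decoupled pieces $Q_i=\bar G_1^{i}\otimes Q_0$, $R_i=\bar G_2^{i}\otimes R_0$. Your additional remarks on $r>1$, on the sign structure of $Q_i,R_i$, and on the bijectivity of $u\mapsto v$ go slightly beyond what the paper spells out, but the core argument is identical.
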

	
	\begin{proof}
		Since $G_1$ and $G_2$ are simultaneously block-diagonalizable, there exists an orthogonal matrix $T\in\mathbb{R}^{N\times N}$ such that 
		\begin{equation}\label{TG1T}
		TG_1T^{-1}=\Phi=\diag\{\Phi_1,...,\Phi_r\},
		\end{equation}
		\begin{equation}\label{TG2T}
		TG_2T^{-1}=\Psi=\diag\{\Psi_1,...,\Psi_r\},
		\end{equation}
		where $\Phi_i\in\mathbb{R}^{N_i\times N_i}$, $\Psi_i\in\mathbb{R}^{N_i\times N_i}$. We will show that the homogeneous MAS (\ref{MAS}) can be clustered into $r$ subgroups, with the $i$-th subgroup including $N_i$ agents.
		
		Define $A_i=I_{N_i}\otimes A$, $B_i=I_{N_i}\otimes B $, $\bar{T}=T\otimes I_n$,
		\begin{equation}\label{de xi}
		\xi=(\xi_1^{\top},..., \xi_r^{\top})^{\top}=(T\otimes I_n)x\in\mathbb{R}^{nN}, 
		\end{equation}
		\begin{equation}\label{de v}
		v=(v_1^{\top},..., v_r^{\top})^{\top}=(T\otimes I_m)u\in\mathbb{R}^{mN},
		\end{equation}
		where $\xi_i\in\mathbb{R}^{nN_i}$, $v_i\in\mathbb{R}^{mN_i}$. Then the following holds:
		\begin{equation}
		\begin{split}
		x^{\top}Qx&=(\bar{T}x)^{\top}(T\otimes I_N)(G_1\otimes Q_0)(T\otimes I_N)^{-1}\bar{T}x\\
		&=\xi^{\top}(TG_1T^{-1}\otimes Q_0)\xi\\
		&=\sum_{i=1}^r\xi_i^{\top}(\Phi_i\otimes Q_0)\xi_i.
		\end{split}
		\end{equation}
		Similarly, we have 
		\begin{equation}
		u^{\top}Ru=\sum_{i=1}^rv_i^{\top}(\Psi_i\otimes R_0)v_i.
		\end{equation}
		Furthermore, 
		\begin{equation}
		\begin{split}
		\dot{\xi}&=\bar{T}((I_N\otimes A)x+(I_N\otimes B)u)\\
		&=\bar T\left((I_N\otimes A)\bar{T}^{\top}\xi+(I_N\otimes B)(T^{\top}\otimes I_m)v\right)\\
		&=(I_N\otimes A)\xi+(I_N\otimes B)v.
		\end{split}
		\end{equation}
		This completes the proof.
	\end{proof}
	
	Given a matrix $\Gamma\in\mathbb{R}^{N\times s}$ and matrix $G\in\mathbb{R}^{N\times N}$, $\Gamma$ is said to be $G$-invariant if there exists a matrix $C\in\mathbb{R}^{s\times s}$ such that $G\Gamma=\Gamma C$. Then we have the following result.
	
	\begin{theorem}\label{th G1G2invariant}
		Matrices $G_1$ and $G_2$ are simultaneously block-diagonalizable with respect to a disconnected graph if and only if there exists a matrix $\Gamma\in\mathbb{R}^{N\times s}$ with $\Gamma^{\top}\Gamma=I_s$ and $s<N$ such that $\Gamma$ is both $G_1$-invariant and $G_2$-invariant.
	\end{theorem}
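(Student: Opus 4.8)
The plan is to prove the two implications separately, leaning throughout on the fact that $G_1$ and $G_2$ are symmetric (indeed $G_1\succeq0$, $G_2\succ0$), so that an invariant subspace of either matrix has an invariant orthogonal complement.

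For the ``only if'' part I would start from the orthogonal matrix $T$ supplied by Definition \ref{de sbd}, write $TG_1T^{\top}=\diag\{\Phi_1,\dots,\Phi_r\}$ and $TG_2T^{\top}=\diag\{\Psi_1,\dots,\Psi_r\}$ with $\Phi_i,\Psi_i\in\mathbb{R}^{N_i\times N_i}$, and note that disconnectedness of $\mathcal{G}$ forces $r\ge2$, hence $N_i\le N-1$ for every $i$. Partitioning the columns of $T^{\top}$ as $[\,S_1,\dots,S_r\,]$ with $S_i\in\mathbb{R}^{N\times N_i}$, the relation $G_1T^{\top}=T^{\top}\diag\{\Phi_1,\dots,\Phi_r\}$ gives $G_1S_i=S_i\Phi_i$ and likewise $G_2S_i=S_i\Psi_i$, while $S_i^{\top}S_i=I_{N_i}$ by orthogonality of $T$. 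Taking $\Gamma=S_1$, $s=N_1<N$ then produces the desired simultaneously $G_1$- and $G_2$-invariant matrix.

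For the ``if'' part I would take $\Gamma$ with $\Gamma^{\top}\Gamma=I_s$, $0<s<N$, $G_1\Gamma=\Gamma C_1$, $G_2\Gamma=\Gamma C_2$, first observe $C_i=\Gamma^{\top}G_i\Gamma$ so the $C_i$ are symmetric, then complete $\Gamma$ to an orthogonal matrix by appending an orthonormal basis $\Gamma^{\perp}$ of $\im(\Gamma)^{\perp}$ and set $T=[\,\Gamma,\Gamma^{\perp}\,]^{\top}$. The key computation is that the off-diagonal blocks of $TG_iT^{\top}$ vanish: $(\Gamma^{\perp})^{\top}G_i\Gamma=(\Gamma^{\perp})^{\top}\Gamma C_i=0$, and the transposed block vanishes because $G_i=G_i^{\top}$. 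Hence $TG_1T^{\top}=\diag\{C_1,D_1\}$ and $TG_2T^{\top}=\diag\{C_2,D_2\}$ are block-diagonal with the \emph{same} partition $\{s,N-s\}$, so they both lie in $\mathcal{S}(\mathcal{G})$ where $\mathcal{G}$ is the union of two disjoint cliques (with self-loops) on $\{1,\dots,s\}$ and $\{s+1,\dots,N\}$ --- a disconnected graph since $0<s<N$. That is precisely simultaneous block-diagonalizability with respect to a disconnected graph.

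The bookkeeping with column partitions and block arithmetic is routine; the one step I would be careful about is the appeal to symmetry of $G_1,G_2$ in the ``if'' direction. Without it, $\im(\Gamma)^{\perp}$ need not be $G_i$-invariant and $TG_iT^{\top}$ would only be block upper-triangular rather than block-diagonal, and the claimed equivalence would fail. For completeness one may note that the construction yields exactly $r=2$ blocks; iterating the argument on $C_1,C_2$ and on $D_1,D_2$ refines the common block structure further, but two blocks already suffice for the statement.
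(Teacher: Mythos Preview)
Your proof is correct and follows essentially the same approach as the paper's: in both directions you and the paper partition the orthogonal matrix $T^{\top}$ into column blocks and exploit that the off-diagonal blocks of $TG_iT^{\top}$ vanish. The only cosmetic differences are that the paper picks $\Gamma$ to be the \emph{complementary} block (the last $N-N_1$ columns of $T^{\top}$ rather than the first $N_1$), and that you make the appeal to symmetry of $G_i$ explicit in the ``if'' direction whereas the paper leaves it implicit.
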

	
	\begin{proof}
		We first prove ``necessity". Let $T\in\mathbb{R}^{N\times N}$ be the orthogonal matrix such that (\ref{TG1T}) and (\ref{TG2T}) hold. We partition $T$ into two parts: $T=[\hat{T},\bar T]^{\top}$, where $\hat{T}\in\mathbb{R}^{N\times N_1}$ and $\bar{T}\in\mathbb{R}^{N\times (N-N_1)}$. Then $\hat{T}^{\top}\hat{T}=I_{N_1}$ and $\bar{T}^{\top}\bar T=I_{N-N_1}$. From (\ref{TG1T}), it holds that 
		$$\hat{T}^{\top}G_1\hat{T}=\Phi_1,~~ \hat{T}^{\top}G_1\bar T=0.$$
		Note that $\hat{T}^{\top}\bar T=0$ and $[\hat{T},\bar T]=T^{\top}$ is non-singular. Then $\im(\bar T)$ is the orthogonal complement space of $\im(\hat T)$. It follows that $$G_1\im(\bar T)\subset\im(\bar T),$$
		implying that $\bar T$ is $G_1$-invariant. Similarly, it can be proved that $\bar T$ is $G_2$-invariant. Therefore, choosing $\Gamma=\bar{T}$ proves the necessity.
		
		Next we prove ``sufficiency". $\Gamma^{\top}\Gamma=I_s$ implies that any pair of columns of $\Gamma$ are orthogonal to each other. Then there must exist another matrix $\Xi\in\mathbb{R}^{N\times N-s}$ such that $[\Gamma,\Xi]$ is orthogonal. Let $T=[\Gamma,\Xi]^{\top}$, we have
		$$TG_iT^{-1}=\begin{pmatrix}
		\Gamma^{\top}G_i\Gamma&\Gamma^{\top}G_i\Xi\\
		\Xi^{\top}G_i\Gamma&\Xi^{\top}G_i\Xi
		\end{pmatrix}, ~i=1,2.$$
		Since $\Gamma$ is $G_1$-invariant and $G_2$-invariant, $\Gamma^{\top}G_i\Xi=0$ for $i=1,2$. Therefore, $G_1$ and $G_2$ are simultaneously block-diagonalizable.
	\end{proof}
	
	Specifically, when $s=1$, we obtain the following corollary immediately.
	\begin{corollary}
		Matrices $G_1$ and $G_2$ are simultaneously block-diagonalizable if they have a common eigenvector.
	\end{corollary}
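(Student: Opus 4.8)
The plan is to read this off from Theorem \ref{th G1G2invariant} by specializing to $s=1$. Suppose $G_1$ and $G_2$ admit a common eigenvector. Since $G_1\succeq 0$ and $G_2\succ 0$ are real symmetric matrices, such a common eigenvector can be taken to be real; call it $w\in\mathbb{R}^N$, and rescale it so that $w^\top w=1$. Set $\Gamma=w\in\mathbb{R}^{N\times 1}$, so that $\Gamma^\top\Gamma=1=I_1$ and $s=1$.

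Next I would verify the two invariance conditions. By assumption there exist scalars $\lambda_1,\lambda_2\in\mathbb{R}$ with $G_1 w=\lambda_1 w$ and $G_2 w=\lambda_2 w$, i.e. $G_1\Gamma=\Gamma\lambda_1$ and $G_2\Gamma=\Gamma\lambda_2$. Taking $C=\lambda_i\in\mathbb{R}^{1\times 1}$ in the definition preceding Theorem \ref{th G1G2invariant}, this shows $\Gamma$ is simultaneously $G_1$-invariant and $G_2$-invariant. Moreover $s=1<N$, since the block-diagonalizability in question is with respect to a disconnected graph, which by the notation in Section \ref{sec: problem} has at least two vertices, so $N\geq 2$. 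Hence all hypotheses of Theorem \ref{th G1G2invariant} are met, and we conclude that $G_1$ and $G_2$ are simultaneously block-diagonalizable.

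There is essentially no obstacle here: the only point worth a word is the passage from ``common eigenvector'' to ``common eigenvector with unit norm,'' which is immediate once one notes that the eigenvector may be chosen real by symmetry of $G_1$ and $G_2$. The content of the corollary is exactly the $s=1$ instance of Theorem \ref{th G1G2invariant}, with the invariant subspace being the line spanned by the shared eigenvector.
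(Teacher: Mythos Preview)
Your proof is correct and follows exactly the approach the paper intends: the corollary is stated immediately after Theorem~\ref{th G1G2invariant} as the special case $s=1$, and your argument fleshes out precisely that specialization. The extra care you take in noting that the common eigenvector can be chosen real and unit-norm, and that $s=1<N$, is appropriate but not something the paper spells out.
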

	
	The following theorem gives another equivalent condition for $G_1$ and $G_2$ to be simultaneously block-diagonalizable.
	\begin{theorem}\label{th E1E2}
		Matrices $G_1$ and $G_2$ are simultaneously block-diagonalizable with respect to graph $\mathcal{G}$ if and only if there exist orthogonal matrices $E_1$ and $E_2$ such that $E_1G_1E_1^{\top}$ and $E_2G_2E_2^{\top}$ are diagonal, and $E_1E_2^{\top}\in \mathcal{S}(\mathcal{G})$.
	\end{theorem}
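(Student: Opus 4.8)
My plan is to establish both implications by direct manipulation of orthogonal factorizations, exploiting that $G_1$ and $G_2$ are real symmetric (since $G_1\succeq0$ and $G_2\succ0$), so that each of them — and each diagonal block appearing in any orthogonal block-diagonalization of them — is orthogonally diagonalizable.

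For the ``only if'' direction I would start from an orthogonal $T$ with $TG_1T^{-1}=\Phi=\diag\{\Phi_1,\ldots,\Phi_r\}$ and $TG_2T^{-1}=\Psi=\diag\{\Psi_1,\ldots,\Psi_r\}$, both in $\mathcal{S}(\mathcal{G})$, as produced in the proof of Theorem \ref{th SD}. Each block $\Phi_i$, $\Psi_i$ is symmetric, so I can write $\Phi_i=U_i\Lambda_iU_i^\top$ and $\Psi_i=W_i\Sigma_iW_i^\top$ with $U_i,W_i$ orthogonal and $\Lambda_i,\Sigma_i$ diagonal, taking $U_i$ and $W_i$ block-diagonal conformably with the connected components of $\mathcal{G}$ that lie inside block $i$. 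Setting $U=\diag\{U_i\}$, $W=\diag\{W_i\}$, the orthogonal matrices $E_1=U^\top T$ and $E_2=W^\top T$ satisfy $E_1G_1E_1^\top=U^\top\Phi U=\diag\{\Lambda_i\}$ and $E_2G_2E_2^\top=W^\top\Psi W=\diag\{\Sigma_i\}$, both diagonal, while $E_1E_2^\top=U^\top W=\diag\{U_i^\top W_i\}$ is block-diagonal conformably with the connected components of $\mathcal{G}$ and hence lies in $\mathcal{S}(\mathcal{G})$.

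For the ``if'' direction I would take $T=E_1$, so that $TG_1T^\top=E_1G_1E_1^\top$ is diagonal — hence block-diagonal, and an element of $\mathcal{S}(\mathcal{G})$ since the graphs considered carry self-loops. Writing $M:=E_1E_2^\top$ and $\Lambda_2:=E_2G_2E_2^\top$, one has $TG_2T^\top=E_1G_2E_1^\top=M\Lambda_2M^\top$. The point is that $M\in\mathcal{S}(\mathcal{G})$ with $\mathcal{G}$ disconnected forces $M$ to be block-diagonal along the connected components of $\mathcal{G}$; since $M$ is orthogonal, each of these diagonal blocks is itself orthogonal, so $M\Lambda_2M^\top$ is again block-diagonal along those components and therefore lies in $\mathcal{S}(\mathcal{G})$. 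Thus $T$ simultaneously block-diagonalizes $G_1$ and $G_2$ with respect to $\mathcal{G}$.

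The algebraic identities involved are routine; the step that needs care — and that I expect to be the main obstacle — is the structural bookkeeping, i.e. verifying that the basis changes used to diagonalize the individual blocks (first direction) and the conjugation $M(\cdot)M^\top$ (second direction) keep every relevant matrix inside $\mathcal{S}(\mathcal{G})$. This relies on the observation recorded after Definition \ref{de sbd} that the block structure is aligned with the connected components of $\mathcal{G}$, so that all the orthogonal matrices entering the construction can be taken, or automatically are, block-diagonal along those components.
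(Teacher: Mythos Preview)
Your argument is correct and, for the necessity direction, essentially identical to the paper's: both diagonalize each symmetric block $\Phi_i,\Psi_i$ by block-diagonal orthogonal matrices $U,V\in\mathcal{S}(\mathcal{G})$ and set $E_1=U^\top T$, $E_2=V^\top T$, whence $E_1E_2^\top=U^\top V\in\mathcal{S}(\mathcal{G})$. For sufficiency you take the slightly more economical route $T=E_1$, so that $TG_1T^\top$ is already diagonal and $TG_2T^\top=M\Lambda_2 M^\top$ with $M=E_1E_2^\top\in\mathcal{S}(\mathcal{G})$; the paper instead sets $T=E_1E_2^\top E_1$ (equivalently $T=UE_1=VE_2$ with $U=E_1E_2^\top$, $V=(E_1E_2^\top)^2$), which is the same mechanism with one extra factor of $M$ and is chosen because this particular $T$ reappears in Algorithm~\ref{alg:1}. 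Both versions rest on the same structural fact you flagged --- that $\mathcal{S}(\mathcal{G})$, interpreted as block-diagonality along the connected components of $\mathcal{G}$, is closed under products and transposes --- so the bookkeeping concern you raise applies equally to the paper's proof and is handled in the same way.
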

	\begin{proof}
		We first prove ``necessity". Since $G_1$ and $G_2$ are both symmetric, they are both diagonalizable. It follows that there exist orthogonal matrices $U\in\mathbb{R}^{N\times N}$ and $V\in\mathbb{R}^{N\times N}$ such that
		$$TG_1T^{-1}=U\diag\{\lambda_1,...,\lambda_N\}U^{\top}=\diag\{\Phi_1,...,\Phi_r\},$$
		$$TG_2T^{-1}=V\diag\{\mu_1,...,\mu_N\}V^{\top}=\diag\{\Psi_1,...,\Psi_r\}.$$
		Let $\mathcal{G}$ be a disconnected graph such that $TG_1T^{-1}\in \mathcal{S}(\mathcal{G})$ and $TG_2T^{-1}\in \mathcal{S}(\mathcal{G})$. Then there must exist feasible $U$ and $V$ such that $$U,V\in \mathcal{S}(\mathcal{G}),$$
		because such a matrix $U$ can be obtained by designing each block in $U^{\top}$ as a collection of eigenvectors of a block in $\Phi$, so does $V$. Let $E_1=U^{\top}T$ and $E_2=V^{\top}T$, then each column of $E_i^{\top}$ is an eigenvector of $G_i$, $i=1,2$. It follows that 
		\begin{equation}\label{UE1=VE2}
		UE_1=VE_2.
		\end{equation}
		Due to the fact $U,V\in \mathcal{S}(\mathcal{G})$, we have
		\begin{equation}\label{E2E1}
		E_2E_1^{\top}=U^{\top}V\in \mathcal{S}(\mathcal{G}).
		\end{equation}
		This implies that $E_1E_2^{\top}\in \mathcal{S}(\mathcal{G})$.
		
		Next we prove ``sufficiency". Suppose $E_1$ and $E_2$ exist and (\ref{E2E1}) holds.  By setting $U=E_1E_2^{\top}\in \mathcal{S}(\mathcal{G})$, $V=(E_1E_2^{\top})^2$ and $T=UE_1=VE_2=E_1E_2^{\top}E_1$, we obtain that 
		$$TG_1T^{-1}=UE_1G_1E_1^{\top}U^{\top}\in \mathcal{S}(\mathcal{G}),$$
		$$TG_2T^{-1}=VE_2G_2E_2^{\top}U^{\top}\in \mathcal{S}(\mathcal{G}).$$
		This completes the proof.
	\end{proof}

	Also note that two diagonalizable matrices are simultaneously diagonalizable if and only if they commute \cite[Theorem 1.3.12]{Horn12}. As the cost function is considered to be in a quadratic form, both $G_1$ and $G_2$ are symmetric and thus diagonalizable. Hence, we have the following theorem.
	\begin{theorem}\label{th commute}
		Problem (\ref{original}) is completely decomposable if $G_1$ and $G_2$ commute.
	\end{theorem}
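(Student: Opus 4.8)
The plan is to reduce Theorem~\ref{th commute} to Theorem~\ref{th SD} by showing that commuting symmetric matrices are simultaneously block-diagonalizable with respect to a disconnected graph whose connected components are all singletons. The point is that ``completely decomposable'' means $r=N$, i.e. $N_i=1$ for every $i$, which corresponds to $\Phi$ and $\Psi$ being genuinely diagonal (each $N_i\times N_i$ block is $1\times1$). Thus the task is to produce a single orthogonal $T$ that diagonalizes $G_1$ and $G_2$ simultaneously.

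First I would invoke \cite[Theorem 1.3.12]{Horn12} as the excerpt already signals: since $G_1$ and $G_2$ are symmetric (hence diagonalizable, with real eigenvalues and orthonormal eigenbases), and since they commute, they are simultaneously diagonalizable by a \emph{single} orthogonal matrix $T$, i.e. there exists orthogonal $T$ with $TG_1T^\top=\diag\{\lambda_1,\dots,\lambda_N\}$ and $TG_2T^\top=\diag\{\mu_1,\dots,\mu_N\}$. The only mild subtlety here is checking that the common diagonalizer can be taken orthogonal rather than merely invertible: this follows because one may intersect the eigenspaces of $G_1$ with those of $G_2$ and apply Gram--Schmidt within each intersection, producing an orthonormal common eigenbasis; the columns of $T^\top$ are exactly this basis.

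Next I would identify the appropriate graph $\mathcal{G}$ for Definition~\ref{de sbd}. Take $\mathcal{G}$ to be the graph on $N$ vertices with edge set consisting only of self-loops (or, if one prefers a graph with no self-loops, the empty-edge graph together with the convention that diagonal entries are always allowed, matching the remark after Definition~\ref{de sbd} that $\mathcal{G}$ ``may have self-loops''). This $\mathcal{G}$ is disconnected (it has $N$ connected components), and $\mathcal{S}(\mathcal{G})$ is precisely the set of diagonal matrices. Since $TG_1T^\top$ and $TG_2T^\top$ are diagonal, they lie in $\mathcal{S}(\mathcal{G})$, so by Definition~\ref{de sbd} the matrices $G_1$ and $G_2$ are simultaneously block-diagonalizable with respect to $\mathcal{G}$, with all blocks of size $N_i=1$.

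Finally I would apply Theorem~\ref{th SD}: simultaneous block-diagonalizability with respect to $\mathcal{G}$ implies Problem~(\ref{original}) is decomposable, and because the block sizes are all $1$ we get $r=N$, hence the problem is \emph{completely} decomposable by Definition~2. I do not expect a serious obstacle here; the one place to be careful is the orthogonality of the common diagonalizer (as opposed to a general similarity), since Definition~\ref{de sbd} and the transformation $\bar T=T\otimes I_n$ used in Theorem~\ref{th SD} both require $T$ orthogonal so that $\bar T^{-1}=\bar T^\top$ and the $\xi$-dynamics retain the form $(I_N\otimes A)\xi+(I_N\otimes B)v$. Establishing that the common eigenbasis can be chosen orthonormal is the crux, and it is immediate from the symmetry of $G_1,G_2$ via the eigenspace-intersection argument above.
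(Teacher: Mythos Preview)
Your proposal is correct and follows essentially the same approach as the paper: the paper simply invokes \cite[Theorem 1.3.12]{Horn12} to conclude that the symmetric (hence diagonalizable) matrices $G_1$ and $G_2$, being commuting, are simultaneously diagonalizable, and then implicitly applies Theorem~\ref{th SD} with $r=N$. Your write-up is in fact more careful than the paper's one-line justification, since you explicitly address why the common diagonalizer can be taken \emph{orthogonal} (as required by Definition~\ref{de sbd} and the proof of Theorem~\ref{th SD}), a point the paper leaves implicit.
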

	
	\begin{remark}
		In practice, $G_2$ is usually a diagonal matrix when there are no relative input efforts between different agents to be minimized. In such scenarios, the diagonal elements of $G_2$ can be designed according to the eigenvectors of $G_1$ such that the condition in Theorem \ref{th E1E2} is satisfied. Specifically, when $G_2=I_N$, which happens in many scenarios, the LQR design problem (\ref{original}) is always completely decomposable because $I_N$ commutes with any $N$-dimensional matrix. In this case, a suitable choice of $T$ is the matrix such that each column of $T^{\top}$ is an eigenvector of $G_1$.
	\end{remark}

	\subsection{Construction of the Transformation Matrix}
	
	In this subsection, we study how to check if $G_1$ and $G_2$ are simultaneously block-diagonalizable and construct the transformation matrix $T$ (if it exits) such that $TG_1T^{-1}$ and $TG_2T^{-1}$ are simultaneously diagonalized. According to the proof of Theorem \ref{th E1E2}, to find an appropriate transformation matrix $T$, we only need to find appropriate $E_1$ and $E_2$ such that $E_1E_2^{\top}\in \mathcal{S}(\mathcal{G})$ for some $\mathcal{G}$.
	
	For $l=1, 2$, let $E_l$ be an orthogonal matrix such that $E_l^{\top}$ collects eigenvectors of $G_l$. The sparsity pattern of $E_1E_2^{\top}$ is always not unique due to the following two facts: (i) when $G_l$ has an eigenvalue with its multiplicity\footnote{Since both $G_1$ and $G_2$ are symmetric, for each eigenvalue of $G_1$ or $G_2$, its geometric multiplicity is equal to its algebraic multiplicity. Hence, we use the term ``multiplicity" to denote either of them.} more than 1, the corresponding orthonormal eigenvectors lie in a space containing infinite number of vectors; (ii) even when both $G_1$ and $G_2$ have $N$ distinct eigenvalues, different sequences of those eigenvectors in $E_1^{\top}$ and $E_2^{\top}$ lead to different sparsity patterns of $E_1E_2^{\top}$. 
	
	The problem of finding appropriate $E_1$ and $E_2$ satisfying the condition in Theorem \ref{th E1E2} can be formulated as the following feasibility problem:
	\begin{equation}\label{find E_1 E_2}
	\begin{split}
	&\text{Find}~~~ E_1,E_2\\
	\text{s.t.}~~& E_1E_2^{\top}\in \mathcal{S}(\mathcal{G}),\\
	&E_lG_lE_l^{\top} \text{ is diagonal}, l=1,2,\\
	&\mathcal{G} \text{ is disconnected.}
	\end{split}
	\end{equation}
	Problem (\ref{find E_1 E_2}) is nonlinear since both $E_1$ and $E_2$ are variables. It becomes more complicated when $G_1$ or $G_2$ has some eigenvalues with multiplicity more than 1. Moreover, the solution to (\ref{find E_1 E_2}) may not correspond to the graph $\mathcal{G}$ with the largest number of connected components. In what follows, we deal with the special case where the following assumption holds:
	\begin{assumption}\label{as distinct}
		Both $G_1$ and $G_2$ have $N$ distinct eigenvalues.
	\end{assumption}

	Under Assumption \ref{as distinct}, the eigenspace for each eigenvalue has dimension 1. Therefore, we only need to figure out how to order the columns of $E_1$ and $E_2$ such that $E_1E_2^{\top}$ is block-diagonalizable. We present Algorithm \ref{alg:1} for constructing the transformation matrix $T$. It is worth noting that under Assumption \ref{as distinct}, the matrix $T$ obtained by Algorithm \ref{alg:1} leads to the graph $\mathcal{G}$ with the largest number of connected components.
	
	\begin{algorithm}[h]
		\caption{Constructing the Transformation Matrix $T$}\label{alg:1}
		\textbf{Input}: $G_1$ and $G_2$ satisfying Assumption \ref{as distinct}.\\
		\textbf{Output}: $T$.
		\begin{itemize}
			\item[1.] Perform eigenvalue decomposition of $G_1$ and $G_2$, respectively. Obtain $F_1=[p_1,...,p_N]^{\top}$ and $F_2=[q_1,...,q_N]^{\top}$ such that the $i$-th column of $F_l^{\top}$ is an eigenvector of matrix $G_l$, $l=1,2$. 
			\item[2.] Find two partitions  $\Pi_l=\{\mathcal{V}_1^l,...,\mathcal{V}_r^l\}$ for $l=1,2$ with  $\cap_{k=1}^r\mathcal{V}_k^l=\mathcal{V}$ such that $|\mathcal{V}_k^1|=|\mathcal{V}_k^2|=N_k\geq1$ for $k=1,...,r$, and for any $i\in\mathcal{V}_{k_1}^1$ and $j\in\mathcal{V}_{k_2}^2$ with $k_1\neq k_2$, it holds that $p_{i}^{\top}q_{j}=0$. If such a partition cannot be found for $r>1$, then $G_1$ and $G_2$ are not simultaneously diagonalizable; otherwise go to Step 3.
			
			\item[3.] According to the partitions $\Pi_1$ and $\Pi_2$, obtain two index sequences $\{\pi_l(1),...,\pi_l(N)\}$, $i=1,2$ such that $$\mathcal{V}_k^l=\{\pi_l(\sum_{\tau=0}^{k-1}N_\tau+1),...,\pi_l(\sum_{\tau=0}^{k}N_\tau)\}, k=1,...,r,$$ where $N_0=0$. Then construct two permutation matrices $P_l\in\mathbb{R}^{N\times N}$, $l=1,2$ such that 
			\begin{equation}
			P_l(i,j)=\left\{
			\begin{aligned}
			&1,~~\text{if}~ \pi_l(i)=j,\\
			&0,~~\text{otherwise}.
			\end{aligned}
			\right.
			\end{equation}
			Let $E_l=P_lF_l$ for $l=1,2$. Choose $U=E_1E_2^{\top}$.
			\item[4.] Matrix $T$ is obtained by $$T=UE_1=E_1E_2^{\top}E_1.$$
		\end{itemize}
	\end{algorithm}	
	
	\begin{remark}
		Step 2 of Algorithm \ref{alg:1} can be achieved by checking the product of every pair of vectors $p_i$ and $q_j$ for $i,j=1,...,N$. To construct $\Pi_1$ and $\Pi_2$, categorize $i_1$ and $i_2$ into the same cluster $\mathcal{V}_k^1$, and categorize $j$ into cluster $\mathcal{V}_k^2$ if there exists an index $j\in\mathcal{V}$ such that $p_{i_1}^{\top}q_j\neq0$ and $p_{i_2}^{\top}q_j\neq0$. According to the number of elementary operations, the time complexity of Step 2 is $\mathcal{O}(N^2)$. 
	\end{remark}
	
	\begin{remark}
		Note that under Assumption \ref{as distinct}, when $G_1$ and $G_2$ are simultaneously diagonalizable, there are infinite possible choices for $T$. More specifically, to construct $T$, matrix $U$ can be any orthogonal matrix in $\mathcal{S}(\mathcal{G})$. According to different choices of $U$, one can obtain different $V$ and $T$. For example, in Algorithm \ref{alg:1}, based on the choice of $U$, matrix $V$ can be derived from (\ref{UE1=VE2}) as follows: $$V=UE_1(E_2)^{\top}=E_1E_2^{\top}E_1E_2^{\top}.$$ Another feasible choice for $U$, $V$ and $T$ will be
		$$U=E_2E_1^{\top}E_2E_1^{\top},~V=E_2E_1^{\top},~ T=E_2E_1^{\top}E_2.$$
		It can be observed that in both the above two cases, once $E_2E_1^{\top}\in \mathcal{S}(\mathcal{G})$, it always holds that $U,V\in \mathcal{S}(\mathcal{G})$.
	\end{remark}

	\section{Parallel RL Algorithm Design}\label{sec: RL}
	
	In this section, we propose conditions for existence and uniqueness of the solution to each smaller-size problem, and establish the relationship between the lower-dimensional optimal controllers and the optimal controller of the original problem (\ref{original}). Based on this relationship, we propose a parallel RL algorithm to solve (\ref{original}). 
	
	\subsection{Existence and Uniqueness of the Optimal Controller}
	
	Let $\mathcal{A}=I_N\otimes A$, and $\mathcal{B}=I_N\otimes B$. To guarantee the existence and uniqueness of the solution to problem (\ref{original}), we make the following assumption.
	
	\begin{assumption}\label{as control and observe}
		The pair $(\mathcal{A},\mathcal{B})$ is controllable and $(Q^{1/2},\mathcal{A})$ is observable. 
	\end{assumption}
	Assumption \ref{as control and observe} implicitly implies that $G_1$ is non-singular, as proved in the following theorem.
	\begin{theorem}\label{th G1>0}
		Given problem (\ref{original}) with $Q$ and $R$ defined in (\ref{QR hom}), then $(Q^{1/2},\mathcal{A})$ is observable if and only if $G_1\succ0$.
	\end{theorem}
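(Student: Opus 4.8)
The plan is to reduce observability of $(Q^{1/2},\mathcal{A})$ to a statement purely about $G_1$ by exploiting the Kronecker structure $Q^{1/2}=G_1^{1/2}\otimes Q_0^{1/2}$ and $\mathcal{A}=I_N\otimes A$, together with two elementary facts: observability of a pair $(C,\mathcal{A})$ depends on $C$ only through $\ker C$ (so $(Q^{1/2},\mathcal{A})$ and $(Q,\mathcal{A})$ are observable simultaneously), and observability is equivalent to the observability matrix $[\,(Q^{1/2})^{\top},\,(Q^{1/2}\mathcal{A})^{\top},\dots,(Q^{1/2}\mathcal{A}^{nN-1})^{\top}\,]^{\top}$ having full column rank (equivalently, the PBH test). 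The workhorse identity throughout is the mixed-product rule $(X\otimes Y)(Z\otimes W)=(XZ)\otimes(YW)$.

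For the ``if'' direction, I would note that $G_1\succ0$ and $Q_0\succ0$ give $Q=G_1\otimes Q_0\succ0$, hence $Q^{1/2}$ is nonsingular and $\ker Q^{1/2}=\{0\}$. Then the observability matrix already has full column rank $nN$ from its first block $Q^{1/2}$ alone, so $(Q^{1/2},\mathcal{A})$ is observable.

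For the ``only if'' direction I would argue by contraposition. Since $G_1\succeq0$, failing $G_1\succ0$ means $G_1$ is singular, so there is $w\neq0$ with $G_1w=0$; positive semidefiniteness then also gives $G_1^{1/2}w=0$. Fix any nonzero $\nu\in\mathbb{R}^n$ and set $z=w\otimes\nu\neq0$. Using $\mathcal{A}^k z=(I_N\otimes A^k)(w\otimes\nu)=w\otimes(A^k\nu)$ and the mixed-product rule, $Q^{1/2}\mathcal{A}^k z=(G_1^{1/2}w)\otimes(Q_0^{1/2}A^k\nu)=0$ for every $k\ge0$. Hence $z$ lies in the kernel of the observability matrix, which is therefore rank-deficient, so $(Q^{1/2},\mathcal{A})$ is not observable. (Alternatively, choosing $\nu$ to be an eigenvector of $A$ makes $z$ an eigenvector of $\mathcal{A}$ lying in $\ker Q^{1/2}$, which fails PBH directly.)

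I do not anticipate a genuine obstacle here; the only steps needing a line of care are (i) justifying $\ker Q^{1/2}=\ker Q$ and $G_1^{1/2}w=0$ from positive semidefiniteness of $Q$ and $G_1$, and (ii) stating explicitly that under the standing hypothesis $G_1\succeq0$, ``not $G_1\succ0$'' is exactly singularity of $G_1$. Everything else is bookkeeping with the Kronecker identity.
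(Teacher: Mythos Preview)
Your argument is correct. Both directions are handled cleanly, and the contrapositive step for necessity---building an explicit unobservable mode $z=w\otimes\nu$ from any $w\in\ker G_1$---is airtight once you note $G_1^{1/2}w=0$ follows from $\|G_1^{1/2}w\|^2=w^{\top}G_1w=0$.

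The route, however, differs from the paper's. The paper diagonalizes $G_1$ by an orthogonal $S$, so that $SG_1S^{-1}=\diag\{\lambda_1,\dots,\lambda_N\}$, and then applies the similarity $(I_{nN}\otimes(S\otimes I_n))$ to the full observability matrix $O$. This yields a block-structured $\bar{O}$ whose columns are scaled by $\sqrt{\lambda_i}$; if some $\lambda_i=0$, entire columns of $\bar{O}$ vanish, forcing $\rank(O)<nN$. Your approach bypasses the diagonalization entirely: you exhibit a single nonzero vector in the unobservable subspace using only the mixed-product rule and the existence of a null vector of $G_1$. This is more elementary and slightly sharper conceptually, since it identifies the unobservable subspace as containing $\ker G_1\otimes\mathbb{R}^n$ without any spectral machinery. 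The paper's version has the minor advantage of displaying the full rank structure of $O$ (useful later when the same diagonalization reappears in the decomposition), but for the theorem as stated your argument is the cleaner one. One small caveat: your parenthetical PBH variant requires a \emph{real} eigenvector of $A$, which need not exist; the main argument you give does not depend on this, so it is harmless.
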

	\begin{proof}
		The ``sufficiency" holds because $Q=G_1\otimes Q_0\succ0$. Next we prove ``necessity". Since $G_1$ is symmetric, there exists an orthogonal matrix $S\in\mathbb{R}^{N\times N}$ such that $$SG_1S^{-1}=\diag\{\lambda_1,...,\lambda_N\}.$$ It suffices to show $\lambda_i>0$ for all $i=1,...,N$. Let $O$ be the observability matrix corresponding to $(Q^{1/2},I_N\otimes A)$. Then $\rank(O)=nN$, and
		$$
		O=\begin{pmatrix}
		G_1^{1/2}\otimes Q_0^{1/2}\\
		G_1^{1/2}\otimes (Q_0^{1/2}A)\\
		\colon\\
		G_1^{1/2}\otimes (Q_0^{1/2}A^{nN-1})
		\end{pmatrix}.
		$$
		Note that $$(SG_1^{1/2}S^{-1})^2=SG_1^{1/2}S^{-1}SG_1^{1/2}S^{-1}=SG_1S^{-1},$$
		implying that $SG_1^{1/2}S^{-1}=(SG_1S^{-1})^{1/2}.$
		As a result,
		\begin{equation}\label{rank(O)}
		\begin{split}
		\bar{O}&\triangleq(I_{nN}\otimes (S\otimes I_n))O( I_{nN}\otimes(S\otimes I_n))^{-1}\\
		&=\begin{pmatrix}
		(S\otimes I_n)(G_1^{1/2}\otimes Q_0^{1/2})(S^{-1}\otimes I_n)\\
		(S\otimes I_n)(G_1^{1/2}\otimes (Q_0^{1/2}A))(S^{-1}\otimes I_n)\\
		\colon\\
		(S\otimes I_n)(G_1^{1/2}\otimes (Q_0^{1/2}A^{nN-1}))(S^{-1}\otimes I_n)
		\end{pmatrix}\\
		&=\begin{pmatrix}
		\diag\{\sqrt{\lambda_1},...,\sqrt{\lambda_N}\}\otimes Q_0^{1/2}\\
		\diag\{\sqrt{\lambda_1},...,\sqrt{\lambda_N}\}\otimes (Q_0^{1/2}A)\\
		\colon\\
		\diag\{\sqrt{\lambda_1},...,\sqrt{\lambda_N}\}\otimes (Q_0^{1/2}A^{nN-1})
		\end{pmatrix}.
		\end{split}
		\end{equation}
		Suppose that $\lambda_i=0$ for some $i\in\{1,...,N\}$, indicating that there are some columns of $\bar{O}$ with all zero elements. Thus, $\rank(O)=\rank(\bar{O})<nN$, which contradicts the observability of $(Q^{1/2},I_N\otimes A)$.
	\end{proof}
	
	When $G_1$ and $G_2$ are simultaneously block-diagonalizable, according to the proof of Theorem \ref{th SD}, it can be equivalently transformed to the following set of decoupled smaller-sized minimization problems: 
	\begin{equation}\label{xii Ji}
	\begin{split}
	\min_{v_i} &J_i(\xi_i,v_i)=\int_0^\infty \left(\xi_i^{\top} (\Phi_i\otimes Q_0)\xi_i+v_i^{\top} (\Psi_i\otimes R_0)v_i\right)dt \\
	&\text{s.t.} \, \dot{\xi}_i= A_i\xi_i+B_iv_i,~~~~ i=1,...,r,    
	\end{split}
	\end{equation}
	where $\xi=(T\otimes I_n)x\in\mathbb{R}^{nN}$, $v=(T\otimes I_m)u\in\mathbb{R}^{mN}$, $A_i=I_{N_i}\otimes A$, $B_i=I_{N_i}\otimes B $.

	\begin{theorem}\label{th block system}
		Consider problem (\ref{original}) with $Q$ and $R$ defined in (\ref{QR hom}). Suppose that $G_1$ and $G_2$ are simultaneously block-diagonalizable. Under Assumption \ref{as control and observe}, the following statements hold:
		
		(i). $(A_i,B_i)$ is controllable for $i=1,...,r$;
		
		(ii). $((\Phi_i\otimes Q_0)^{1/2}, A_i)$ is observable for all $i=1,...,N$.
	\end{theorem}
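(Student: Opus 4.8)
The plan is to show that controllability of $(\mathcal{A},\mathcal{B})$ and observability of $(Q^{1/2},\mathcal{A})$ transfer, respectively, to the blocks $(A_i,B_i)$ and $((\Phi_i\otimes Q_0)^{1/2},A_i)$ after the orthogonal change of coordinates $\bar T = T\otimes I_n$. For part (i), first I would observe that since $\mathcal{A}=I_N\otimes A$ and $\mathcal{B}=I_N\otimes B$, the PBH test for $(\mathcal{A},\mathcal{B})$ reduces to a statement about the single pair $(A,B)$: namely, $(I_N\otimes A,I_N\otimes B)$ is controllable if and only if $(A,B)$ is controllable (this follows from $\rank[\lambda I_{nN}-I_N\otimes A,\ I_N\otimes B] = N\cdot\rank[\lambda I_n - A,\ B]$ using the block structure). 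Since each $A_i = I_{N_i}\otimes A$ and $B_i = I_{N_i}\otimes B$ with $N_i\ge 1$, the same equivalence gives that $(A_i,B_i)$ is controllable iff $(A,B)$ is, which holds by Assumption \ref{as control and observe}. So part (i) is essentially immediate from the Kronecker structure.

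For part (ii), the key point is that the orthogonal transformation $\bar T$ block-diagonalizes $Q$ exactly as in the proof of Theorem \ref{th SD}: $\bar T Q \bar T^{-1} = (TG_1T^{-1})\otimes Q_0 = \Phi\otimes Q_0 = \diag\{\Phi_1\otimes Q_0,\dots,\Phi_r\otimes Q_0\}$, while $\bar T \mathcal{A}\bar T^{-1} = \mathcal{A} = \diag\{A_1,\dots,A_r\}$ since $\bar T$ commutes with $I_N\otimes A$. Thus, up to the similarity transformation $\bar T$ and a simultaneous reordering of rows, the observability matrix of $(Q^{1/2},\mathcal{A})$ is block-diagonal with the $i$-th block being precisely the observability matrix of $((\Phi_i\otimes Q_0)^{1/2},A_i)$ — here I would use that $\bar T Q^{1/2}\bar T^{-1} = (\bar T Q\bar T^{-1})^{1/2} = \diag\{(\Phi_i\otimes Q_0)^{1/2}\}$ because $\Phi_i\otimes Q_0\succeq 0$ (note $\Phi_i\succeq 0$ as a principal-type block of a congruence of $G_1\succeq 0$; in fact $G_1\succ0$ by Theorem \ref{th G1>0}, so each $\Phi_i\succ0$). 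Then $\rank$ of the full observability matrix equals $nN$ iff each diagonal block has full rank $nN_i$, which is exactly observability of each $((\Phi_i\otimes Q_0)^{1/2},A_i)$. Since Assumption \ref{as control and observe} gives the full-rank condition, each block is observable.

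The one place that needs a little care — and the main obstacle — is making the ``block-diagonal observability matrix'' argument rigorous: the observability matrix of a block-diagonal pair is not literally block-diagonal but becomes so after a fixed permutation of its rows (grouping, for each power $A^k$, the stacked blocks by subgroup index $i$), and this permutation does not change the rank. I would state this as a short lemma-free remark, noting that rank is permutation-invariant, and that $\rank O(Q^{1/2},\mathcal{A}) = \rank O(\bar T Q^{1/2}\bar T^{-1},\bar T\mathcal{A}\bar T^{-1})$ since $\bar T$ is orthogonal hence invertible. Once that bookkeeping is in place, parts (i) and (ii) both follow from the corresponding unreduced conditions in Assumption \ref{as control and observe} together with the fact that a block-diagonal system is controllable/observable iff every block is.
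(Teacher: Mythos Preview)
Your approach is essentially the paper's: for (i) reduce controllability of $(I_N\otimes A,I_N\otimes B)$ to that of $(A,B)$ and then back to each $(I_{N_i}\otimes A,I_{N_i}\otimes B)$; for (ii) block-diagonalize the transformed pair and split the rank condition. The paper argues (ii) directly with the observability matrices $O_i$ rather than via the similarity $\bar T$, but the content is the same.

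There is, however, one genuine imprecision you should fix. You assert that after the row permutation the $i$-th diagonal block is ``precisely the observability matrix of $((\Phi_i\otimes Q_0)^{1/2},A_i)$''. It is not: the global observability matrix stacks powers $A_i^0,\dots,A_i^{nN-1}$, whereas the observability matrix of the $nN_i$-dimensional subsystem stops at $A_i^{nN_i-1}$. Your block therefore has \emph{more} rows than the subsystem observability matrix, and you still need to argue that those extra rows do not change the rank. The paper handles exactly this point by invoking the Cayley--Hamilton theorem (any $A_i^k$ with $k\ge nN_i$ is a linear combination of lower powers, so the column span stabilizes at $k=nN_i-1$); once you insert that step, your argument that $\sum_i \rank O_i = nN$ forces each $\rank O_i = nN_i$ goes through as written.
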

	\begin{proof}
		(i). Assumption \ref{as control and observe} implies that $(A,B)$ is controllable. According to the definition of controllability, $(A_i,B_i)=(I_{N_i}\otimes A, I_{N_i}\otimes B)$ is controllable for all $i=1,...,r$.
		
		(ii). We first note that $((\Phi\otimes Q_0)^{1/2},\mathcal{A})$ is observable because $\Phi=TGT^{-1}\succ0$. Let $O_i$ be the observability matrix corresponding to $((\Phi_i\otimes Q_0)^{1/2},A_i)$. By  Cayley-Hamilton Theorem, $A^{n}$ can be denoted by a linear combination of $I_n$, $A$, ..., $A^{n-1}$. Therefore,
		\begin{equation}
		\begin{split}
		\rank(O_i)&=\rank
		\begin{pmatrix}
		\Phi_i^{1/2}\otimes Q_0^{1/2}\\
		\Phi_i^{1/2}\otimes Q_0^{1/2}A\\
		\colon\\
		\Phi_i^{1/2}\otimes Q_0^{1/2}A^{nN_i-1}
		\end{pmatrix}\\
		&=\rank\begin{pmatrix}
		\Phi_i^{1/2}\otimes Q_0^{1/2}\\
		\Phi_i^{1/2}\otimes Q_0^{1/2}A\\
		\colon\\
		\Phi_i^{1/2}\otimes Q_0^{1/2}A^{n-1+k}
		\end{pmatrix}
		\end{split}
		\end{equation}
		for any $k\geq0$. Due to observability of $((\Phi\otimes Q_0)^{1/2},\mathcal{A})$, we have $$\rank(O)=\sum_{i=1}^r\rank(O_i)=nN.$$ Together with $\rank(O_i)\leq nN_i$, we have $\rank(O_i)=nN_i$ for all $i=1,...,r$.
	\end{proof}

	\subsection{Parallel RL Algorithm}
	We next propose a Parallel RL algorithm to synthesize the optimal controller when (\ref{original}) is decomposable. Let $u^*=-K^*x$ be the optimal controller for problem (\ref{original}), $v_i^*=-\kappa_i\xi_i$ be the optimal controller for the $i$-th optimal control problem in (\ref{xii Ji}). Define $T_i^{\top}\in\mathbb{R}^{N_i\times N}$ as a matrix consisting of the $N_i$ rows in $T$ corresponding to the $i$-th cluster. That is, $T=[T_1,...,T_r]^{\top}$. From the form of $V$, we have 
	\[
	\begin{split}
	u^*&=\sum_{i=1}^r(T_i\otimes I_m)v_i^*
	=-\sum_{i=1}^r(T_i\otimes I_m)\kappa_i(T_i^{\top} \otimes I_n)x.
	\end{split}
	\]
	It follows that
	\begin{equation}\label{K*}
	\begin{split}
	K^*&=\sum_{i=1}^r(T_i\otimes I_m)\kappa_i(T_i^{\top} \otimes I_n)\\
	&=(T^{\top}\otimes I_m)\kappa(T\otimes I_n),    
	\end{split}
	\end{equation}
	where $\kappa=\diag\{\kappa_1,...,\kappa_r\}$.	To solve for each $\kappa_i$ without knowing $A$ and $B$, one may use \cite[Algorithm 3]{Jing20}. Note that although \cite[Algorithm 3]{Jing20} is designed for solving a minimum variance problem, it also applies to a deterministic LQR problem by setting the coefficients appropriately.

	\begin{algorithm}[h]
		\caption{Parallel RL Algorithm for Optimal Control of Homogeneous Linear MAS}\label{alg:2}
		\textbf{Input}: $G_1$, $G_2$, $T$, $Q_0$, $R_0$.\\
		\textbf{Output}: Optimal control gain $K^*$
		\begin{itemize}
			\item[1.] Obtain scalars $N$, $n$ and $m$ from the dimensions of $G_1$, $Q_0$ and $R_0$. Compute $\Phi_i$ and $\Psi_i$ from (\ref{TG1T}) and (\ref{TG2T}), respectively.		
			\item[2.] Run \cite[Algorithm 3]{Jing20} to solve the $i$-th LQR problem (\ref{xii Ji}) for $i=1,...,r$. Obtain the optimal control gain $\kappa_i$ for the $i$-th LQR problem.  
			\item[3.] Compute the global optimal control gain $K^*$ according to (\ref{K*}).
		\end{itemize}
	\end{algorithm}	
	
	Although Algorithm \ref{alg:1} may need to be implemented in advance to provide inputs for Algorithm \ref{alg:2}, where one needs to execute eigenvalue decomposition for two $N\times N$ matrices once, the overall computational complexity is still much lower than that of computing the inversion of a $(nN(nN+1)/2+mnN^2)$-dimensional matrix, which is required in the conventional RL algorithm \cite{Jiang12}.
	
	\section{Robustness Analysis}\label{sec: robustness}
	In reality, a MAS may not have perfectly homogeneous agents. We next present robustness analysis of our parallel LQR controller when the controller is designed under the assumption of homogeneity as in Algorithm \ref{alg:2}, but is implemented in a heterogenous MAS. 
	
	Suppose the dynamic model of agent $i$ is given as
	\begin{equation}\label{het xi}
	\dot{x}_i=A^h_ix_i+B_i^hu_i, ~~i=1,...,N.
	\end{equation}
	The MAS model is written in a compact form as
	\begin{equation}\label{het x}
	\dot{x}=\mathcal{A}^hx+\mathcal{B}^hu.
	\end{equation}
	Both $\mathcal{A}^h$ and $\mathcal{B}^h$ are unknown. The transfer function of (\ref{het x}) from $u$ to $x$ is
	\begin{equation}
	G(s)=(sI_{nN}-\mathcal{A}^h)^{-1}\mathcal{B}^h.
	\end{equation}
	Throughout this section, we make the following assumption:
	\begin{assumption}\label{as Ah}
		$(\mathcal{A}_i^h,\mathcal{B}_i^h)$ is controllable for $i=1,..., N$, $Q\succ0$.
	\end{assumption}
	Note that $Q\succ0$ is a necessary condition for observability of the optimal control problem for homogeneous MAS, as shown in Theorem \ref{th G1>0}. If $Q$ has at least one zero eigenvalue, the condition for implementation of  Algorithm \ref{alg:2} is not satisfied.

	Let $T\in\mathbb{R}^{N\times N}$ be the transformation matrix such that $TG_1T^{\top}$ and $TG_2T^{\top}$ are simultaneously block-diagonal. Let $K$ be the control gain learned by Algorithm \ref{alg:2}\footnote{Note that Algorithm \ref{alg:2} is always able to converge to some control gain matrix under Assumption \ref{as Ah} because control input and state data are collected on each cluster of individual systems. The model mismatch is only reflected in the control objective of each subproblem (\ref{xii Ji}).}. From the proof of Theorem \ref{th SD}, we know that $K$ is the optimal control gain of the following problem:
	\begin{equation}\label{xhat}
	\begin{split}
	\min_\zeta&~~\hat{J}(\hat{x}(0),\zeta)=\int_0^\infty (\hat{x}^{\top}Q\hat{x}+\zeta^{\top}R\zeta)dt\\
	\text{s.t.} &~~ \dot{\hat{x}}=\bar{T}^{\top}\mathcal{A}^h\bar{T}\hat{x}+\bar{T}^{\top}\mathcal{B}^h\hat{T}\zeta,
	\end{split}
	\end{equation}
	where $\bar{T}=T\otimes I_n$, $\hat{T}=T\otimes I_m$. This implies that $$\hat{\mathcal{A}}\triangleq\bar{T}^{\top}\mathcal{A}^h\bar{T}-\bar{T}^{\top}\mathcal{B}^h\hat{T}K$$ is Hurwitz. Under Assumption \ref{as Ah}, we know $((\bar{T}Q\bar{T}^T)^{1/2},\mathcal{A}^h)$ is observable. Then there is a unique positive definite solution $P$ to the following algebraic Riccati equation:
	\begin{equation}
	\mathcal{A}^hP+P\mathcal{A}^h+\bar{T}Q\bar{T}^T-P\mathcal{B}^h\hat{T}R^{-1}\hat{T}^{\top}\mathcal{B}^hP=0.
	\end{equation}
	Let $\hat{P}=\bar{T}^TP\bar{T}$. Then we have $K=R^{-1}(\bar{T}^{\top}\mathcal{B}^h\hat{T})^T\hat{P}.$
	
	Let $\tilde{\mathcal{A}}=\mathcal{A}^h-\bar{T}^{\top}\mathcal{A}^h\bar{T}$, $\tilde{\mathcal{B}}=\mathcal{B}^h-\bar{T}^{\top}\mathcal{B}^h\hat{T}$. We present stability and performance analysis respectively as follows.
	
\subsection{Stability Analysis}	
	
	The following theorem gives a condition on $\tilde{\mathcal{A}}$ and $\tilde{B}$ such that system (\ref{het x}) with control gain $K$ is stable.
	
	\begin{theorem}\label{thm:lyapunov}
		Consider the optimal control of the heterogeneous MAS (\ref{het x}) with the cost function $J(x(0),u)$ in (\ref{original}). Suppose that $G_1$ and $G_2$ are simultaneously block-diagonalizable. By implementing the optimal control gain learned by Algorithm \ref{alg:2}, MAS (\ref{het x}) is asymptotically stable if
		\begin{multline}\label{LMI}
		\hat{P}\tilde{\mathcal{A}}+\tilde{\mathcal{A}}^T\hat{P}-\hat{P}\tilde{\mathcal{B}}R^{-1}\hat{\mathcal{B}}^T\hat{P}-\hat{P}\hat{\mathcal{B}}R^{-1}\tilde{\mathcal{B}}^T\hat{P}\\-Q-\hat{P}\mathcal{B}^hR^{-1}\mathcal{B}^{hT}\hat{P}\prec0.
		\end{multline}
	\end{theorem}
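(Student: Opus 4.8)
The plan is a quadratic Lyapunov argument in the original coordinates, built on the Riccati solution that defines $K$. Take $V(x)=x^{\top}\hat{P}x$. Since Assumption~\ref{as Ah} makes $(\mathcal{A}^h,\mathcal{B}^h)$ controllable and $((\bar{T}Q\bar{T}^{\top})^{1/2},\mathcal{A}^h)$ observable, the algebraic Riccati equation stated just before the theorem has a unique solution $P\succ0$, and since $\bar{T}$ and $\hat{T}$ are orthogonal, $\hat{P}=\bar{T}^{\top}P\bar{T}\succ0$. Hence $V$ is positive definite and radially unbounded, so it suffices to show that (\ref{LMI}) forces $\dot{V}(x)<0$ along every nonzero solution of the closed loop $\dot{x}=(\mathcal{A}^h-\mathcal{B}^hK)x$.

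The preparatory step is to translate the Riccati equation for $P$ into the transformed coordinates. Set $\hat{\mathcal{B}}\triangleq\bar{T}^{\top}\mathcal{B}^h\hat{T}$, so that $\mathcal{B}^h=\hat{\mathcal{B}}+\tilde{\mathcal{B}}$ with $\tilde{\mathcal{B}}$ as defined in the text, $\mathcal{A}^h=\bar{T}^{\top}\mathcal{A}^h\bar{T}+\tilde{\mathcal{A}}$, and $K=R^{-1}\hat{\mathcal{B}}^{\top}\hat{P}$. Conjugating the Riccati equation by $\bar{T}$ (left multiply by $\bar{T}^{\top}$, right multiply by $\bar{T}$, and use $\bar{T}\bar{T}^{\top}=I$, $\hat{T}\hat{T}^{\top}=I$) produces the Riccati equation for $(\bar{T}^{\top}\mathcal{A}^h\bar{T},\hat{\mathcal{B}})$ with the same weights $Q,R$, from which the closed-loop Lyapunov identity
\[
\hat{\mathcal{A}}^{\top}\hat{P}+\hat{P}\hat{\mathcal{A}}=-\big(Q+K^{\top}RK\big),\qquad \hat{\mathcal{A}}=\bar{T}^{\top}\mathcal{A}^h\bar{T}-\hat{\mathcal{B}}K,
\]
follows, with $\hat{\mathcal{A}}$ Hurwitz as already noted. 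Since $\mathcal{A}^h-\mathcal{B}^hK=\hat{\mathcal{A}}+\tilde{\mathcal{A}}-\tilde{\mathcal{B}}K$, the derivative along the closed loop is
\[
\dot V=x^{\top}\!\Big(\hat{\mathcal{A}}^{\top}\hat{P}+\hat{P}\hat{\mathcal{A}}+\hat{P}\tilde{\mathcal{A}}+\tilde{\mathcal{A}}^{\top}\hat{P}-\hat{P}\tilde{\mathcal{B}}K-K^{\top}\tilde{\mathcal{B}}^{\top}\hat{P}\Big)x.
\]

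Substituting the Lyapunov identity for the bracketed first two terms, replacing every remaining $K$ by $R^{-1}\hat{\mathcal{B}}^{\top}\hat{P}$, and expanding $K^{\top}RK$ and the cross terms through $\mathcal{B}^h=\hat{\mathcal{B}}+\tilde{\mathcal{B}}$, the matrix inside $\dot V$ collapses to the symmetric matrix on the left-hand side of (\ref{LMI}). Thus $\dot V(x)=x^{\top}Mx$ with $M$ equal to that matrix, and the hypothesis $M\prec0$ gives $\dot V(x)<0$ for all $x\neq0$; by the Lyapunov stability theorem the closed loop $\dot x=(\mathcal{A}^h-\mathcal{B}^hK)x$ is asymptotically stable, which is exactly the claim. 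I expect the only real difficulty to be this last reduction — carefully regrouping the products coming from $\hat{P}\mathcal{B}^hR^{-1}\mathcal{B}^{hT}\hat{P}$ against the cross terms $-\hat{P}\tilde{\mathcal{B}}K-K^{\top}\tilde{\mathcal{B}}^{\top}\hat{P}$ and the quadratic term supplied by the Riccati identity, while keeping the transposes straight in the (abbreviated) Riccati equation. Everything structural that is needed — $\hat{P}\succ0$, uniqueness of $P$, and Hurwitzness of $\hat{\mathcal{A}}$ — has already been established in the text immediately preceding the theorem, so no further machinery is required once this algebra is carried out.
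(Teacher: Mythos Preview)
Your approach is essentially identical to the paper's: take $V=x^{\top}\hat P x$, split the closed loop as $\mathcal{A}^h-\mathcal{B}^hK=\hat{\mathcal{A}}+(\tilde{\mathcal{A}}-\tilde{\mathcal{B}}K)$, use the Riccati/Lyapunov identity for $\hat{\mathcal{A}}$ to replace $\hat P\hat{\mathcal{A}}+\hat{\mathcal{A}}^{\top}\hat P$, substitute $K=R^{-1}\hat{\mathcal{B}}^{\top}\hat P$, and read off the matrix in (\ref{LMI}). The paper carries out exactly these steps with no additional ingredients, so your plan matches it.
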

	
	\begin{proof}
		By applying the controller $u=-Kx$, system (\ref{het x}) becomes
		\begin{equation}
		\dot{x}=(\tilde{\mathcal{A}}-\tilde{\mathcal{B}}K)x.
		\end{equation}
		Consider  $V=x^T\hat{P}x$ as a Lyapunov function. It follows that
		\begin{equation}
		\begin{split}
		\dot{V}&=x^T(\hat{P}\tilde{\mathcal{A}}+\tilde{\mathcal{A}}^T\hat{P})x+x^T(\hat{P}(\tilde{\mathcal{A}}-\tilde{\mathcal{B}}K)+(\tilde{\mathcal{A}}-\tilde{\mathcal{B}}K)^T\hat{P})x\\
		&=x^T(-Q-\hat{P}\mathcal{B}^hR^{-1}\mathcal{B}^{hT}\hat{P})x\\
		&+x^T(\hat{P}\tilde{\mathcal{A}}+\tilde{\mathcal{A}}^T\hat{P}-\hat{P}\tilde{\mathcal{B}}R^{-1}\hat{\mathcal{B}}^T\hat{P}-\hat{P}\hat{\mathcal{B}}R^{-1}\tilde{\mathcal{B}}^T\hat{P})x.
		\end{split}
		\end{equation}
		Condition (\ref{LMI}) implies that $\dot{V}<0$ if $x\neq0$ and $\dot{V}=0$ otherwise. Therefore, MAS (\ref{het x}) with control gain $K$ is asymptotically stable.
	\end{proof}

	Theorem~\ref{thm:lyapunov} indicates stability robustness of our controller by giving a condition on $\tilde A$ and $\tilde B$ associated with $\hat{P}$. When the open loop system~(\ref{het x}) is stable, the following theorem employs the small-gain theorem \cite{Zhou98} to obtain a condition on $\tilde A$ and $\tilde B$ associated with the transfer function $G(s)$ of (\ref{het x}) for the control gain $K$ to be stabilizing.
	\begin{theorem}\label{th G(s) stable}
		Consider the optimal control of the heterogeneous MAS (\ref{het x}) with the cost function $J(x(0),u)$ in (\ref{original}). Suppose that $G_1$ and $G_2$ are simultaneously block-diagonalizable and $\mathcal{A}^h$ is Hurwitz. By implementing the optimal control gain learned by Algorithm \ref{alg:2}, MAS (\ref{het x}) is asymptotically stable if 
		\begin{equation}\label{G(s)condition}
		||G_{\Sigma}(s)||_{\infty}<||K(sI_{nN}-\hat{\mathcal{A}})^{-1}||_{\infty}^{-1},
		\end{equation}
		where 
		\begin{equation}\label{GSigma}
		G_{\Sigma}(s)=(\tilde{A}-\tilde{B}K)G(s).
		\end{equation}
	Specifically, when the MAS is homogeneous, $G_{\Sigma}(s)=0$.
	\end{theorem}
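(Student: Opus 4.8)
The strategy is to exhibit the genuine closed loop $\dot x=(\mathcal{A}^h-\mathcal{B}^hK)x$ produced by implementing $u=-Kx$ in~(\ref{het x}) as a diagonal block of a $2nN$-dimensional feedback interconnection built from the two stable transfer functions $G_{\Sigma}(s)$ and $K(sI_{nN}-\hat{\mathcal{A}})^{-1}$, and then to certify its stability from that loop by the small-gain theorem.

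I would start from the identity $\mathcal{A}^h-\mathcal{B}^hK=\hat{\mathcal{A}}+\tilde{\mathcal{A}}-\tilde{\mathcal{B}}K$, which is immediate from the definitions of $\hat{\mathcal{A}}$, $\tilde{\mathcal{A}}$ and $\tilde{\mathcal{B}}$. Guided by it, introduce the auxiliary system with state $(\eta^{\top},\rho^{\top})^{\top}$ and dynamics $\dot\eta=\mathcal{A}^h\eta-\mathcal{B}^hK\rho$, $\dot\rho=\hat{\mathcal{A}}\rho+(\tilde{\mathcal{A}}-\tilde{\mathcal{B}}K)\eta$, i.e. with system matrix $A_{\mathrm{cl}}=\left[\begin{smallmatrix}\mathcal{A}^h & -\mathcal{B}^hK\\ \tilde{\mathcal{A}}-\tilde{\mathcal{B}}K & \hat{\mathcal{A}}\end{smallmatrix}\right]$; by construction this is exactly the unit-negative-feedback interconnection of $P_1(s)=(\tilde{\mathcal{A}}-\tilde{\mathcal{B}}K)(sI-\mathcal{A}^h)^{-1}\mathcal{B}^h=G_{\Sigma}(s)$ (with state $\eta$) and $P_2(s)=K(sI-\hat{\mathcal{A}})^{-1}$ (with state $\rho$). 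The change of coordinates $(\eta,\rho)\mapsto(\eta,\rho-\eta)$ together with the above identity converts $A_{\mathrm{cl}}$ into the block upper-triangular matrix $\left[\begin{smallmatrix}\mathcal{A}^h-\mathcal{B}^hK & -\mathcal{B}^hK\\ 0 & \bar T^{\top}\mathcal{A}^h\bar T+\tilde{\mathcal{B}}K\end{smallmatrix}\right]$, whence $\det(sI-A_{\mathrm{cl}})=\det\bigl(sI-(\mathcal{A}^h-\mathcal{B}^hK)\bigr)\det\bigl(sI-(\bar T^{\top}\mathcal{A}^h\bar T+\tilde{\mathcal{B}}K)\bigr)$. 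In particular $A_{\mathrm{cl}}$ Hurwitz implies $\mathcal{A}^h-\mathcal{B}^hK$ Hurwitz, i.e. asymptotic stability of~(\ref{het x}) under $u=-Kx$.

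It then remains to show $A_{\mathrm{cl}}$ is Hurwitz whenever~(\ref{G(s)condition}) holds. Since $\mathcal{A}^h$ is assumed Hurwitz and $\hat{\mathcal{A}}$ is Hurwitz (as recorded just before Theorem~\ref{thm:lyapunov}), both $P_1$ and $P_2$ lie in $\mathcal{H}_\infty$, and~(\ref{G(s)condition}) is precisely $\|P_1\|_\infty\|P_2\|_\infty<1$. Expanding $\det(sI-A_{\mathrm{cl}})$ by the Schur complement of its $(1,1)$ block $sI-\mathcal{A}^h$ (nonsingular on the closed right half-plane because $\mathcal{A}^h$ is Hurwitz) and using $\det(I+XY)=\det(I+YX)$ once gives $\det(sI-A_{\mathrm{cl}})=\det(sI-\mathcal{A}^h)\det(sI-\hat{\mathcal{A}})\det\bigl(I_{mN}+K(sI-\hat{\mathcal{A}})^{-1}G_{\Sigma}(s)\bigr)$; the first two factors never vanish for $\mathrm{Re}\,s\ge0$, while by~(\ref{G(s)condition}) and the maximum-modulus principle the spectral radius of $K(sI-\hat{\mathcal{A}})^{-1}G_{\Sigma}(s)$ stays below $1$ throughout $\mathrm{Re}\,s\ge0$, so the third factor is nonzero there. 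Hence $\det(sI-A_{\mathrm{cl}})\ne0$ on the closed right half-plane, so $A_{\mathrm{cl}}$, and therefore $\mathcal{A}^h-\mathcal{B}^hK$, is Hurwitz. (This last step is just the multivariable small-gain theorem applied to the loop $(P_1,P_2)$, whose realizations are stable and hence stabilizable and detectable; that is the argument the theorem statement alludes to.)

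For the homogeneous case, the Kronecker identities $\bar T^{\top}\mathcal{A}^h\bar T=(T^{\top}T)\otimes A=\mathcal{A}^h$ and $\bar T^{\top}\mathcal{B}^h\hat T=(T^{\top}T)\otimes B=\mathcal{B}^h$ give $\tilde{\mathcal{A}}=0$ and $\tilde{\mathcal{B}}=0$, hence $G_{\Sigma}(s)=(\tilde{\mathcal{A}}-\tilde{\mathcal{B}}K)G(s)=0$ and~(\ref{G(s)condition}) holds trivially, recovering the stability already known in that case. The one genuinely non-routine step is the first: recognizing that the true closed-loop matrix $\mathcal{A}^h-\mathcal{B}^hK$ appears as a diagonal block of the interconnection of $G_{\Sigma}(s)$ and $K(sI-\hat{\mathcal{A}})^{-1}$, so that a small-gain certificate on the loop transfers to Hurwitzness of $\mathcal{A}^h-\mathcal{B}^hK$. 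Once the interconnection and the block-triangularizing change of coordinates are in hand, the remaining Schur-complement and maximum-modulus steps are standard.
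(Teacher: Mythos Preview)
Your proposal is correct and follows essentially the same approach as the paper: both proofs write the closed loop as the feedback interconnection of a plant block $\Sigma$ with transfer function $G_\Sigma(s)=(\tilde{\mathcal A}-\tilde{\mathcal B}K)(sI-\mathcal A^h)^{-1}\mathcal B^h$ and a controller block $\Delta$ with transfer function $-K(sI-\hat{\mathcal A})^{-1}$, observe that both open loops are stable, and invoke the small-gain theorem under~(\ref{G(s)condition}). Your version adds two pieces of rigor the paper leaves implicit: (i) the explicit block-triangularizing similarity $(\eta,\rho)\mapsto(\eta,\rho-\eta)$ showing that $\mathcal A^h-\mathcal B^hK$ is a diagonal block of the $2nN$-dimensional interconnection matrix (the paper simply asserts the closed loop ``can be rewritten'' as the interconnection and relies on $\eta(0)=x(0)$), and (ii) the Schur-complement/maximum-modulus computation that makes the small-gain conclusion concrete at the level of characteristic polynomials rather than citing the theorem as a black box.
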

	\begin{proof}
		Applying the controller $u=-Kx$ to (\ref{het x}), then closed-loop system dynamics can be rewritten as the following two interconnected systems:
		\begin{equation}\label{Sigma}
		\Sigma:\left\{
		\begin{aligned}
		&\dot{x}=\mathcal{A}^hx+\mathcal{B}^hu,\\
		&e=(\tilde{A}-\tilde{B}K)x,
		\end{aligned}
		\right.
		\end{equation}	
		and
		\begin{equation}\label{Delta}
		\Delta:\left\{
		\begin{aligned}
		&\dot{\eta}=\hat{\mathcal{A}}\eta+e,\\
		&u=-K\eta.
		\end{aligned}
		\right.
		\end{equation}	
		The feedback interconnection between system $\Sigma$ and system $\Delta$ is interpreted in Fig. \ref{fig SigmaDelta}, where $x(0)$ is the initial state of system (\ref{het x}). Accordingly, $\eta(0)=x(0)$, $u(0)=-K\eta(0)$, $e(0)=(\tilde{A}-\tilde{B}K)x(0)$.	
		\begin{figure}
			\centering
			\includegraphics[width=8cm]{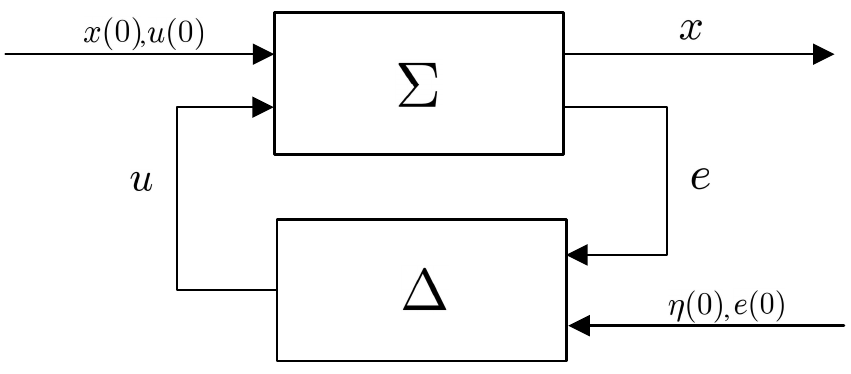}
			\caption{Interconnection between system $\Sigma$ and system $\Delta$.} \label{fig SigmaDelta}
			\vspace{-0.3cm}
		\end{figure}		
		The transfer function of (\ref{Sigma}) from $u$ to $e$ is $G_\Sigma(s)$ in (\ref{GSigma}).	
		On the other hand, the transfer function of (\ref{Delta}) from $e$ to $u$ is
		\begin{equation}
		G_\Delta(s)=-K(sI_{nN}-\hat{\mathcal{A}})^{-1}.
		\end{equation}
		We note that the open loop systems of $\Sigma$ and $\Delta$ are both stable. Moreover, (\ref{G(s)condition}) implies that
		\begin{equation}
		||G_{\Sigma}(s)||_{\infty}||G_{\Delta}(s)||_{\infty}<1.
		\end{equation}
		Using the small-gain theorem, we can conclude that the interconnection between $\Sigma$ and $\Delta$ is stable.
	\end{proof}
	\subsection{Performance Analysis}
	Next we analyze the performance of the heterogenous MAS (\ref{het x}) with controller $u=-Kx$. Let $\sqrt{Q}$ and $\sqrt{R}$ be the two matrices such that $\sqrt{Q}^2=Q$ and $\sqrt{R}^2=R$. Define $y=\begin{pmatrix}
	\sqrt{Q}\\
	-\sqrt{R}K
	\end{pmatrix}x$. It follows that
	\begin{equation}
	J(x(0),u)=\int_0^\infty y^{\top}(t)y(t)dt|_{u=x(0)\delta(t)}=||G_{uy}x(0)||_2^2,
	\end{equation}
	where $\delta(t)=\infty$ if $t=0$, and $\delta(t)=0$ if $t\neq0$, and $\int_{-\infty}^{\infty}\delta(t)dt=1$.
	
	For the convenience of analysis, we evaluate robustness with respect to the following $L_2$-norm directly:
	\begin{equation}\label{J_2}
	J_2(x(0),u)=||y||_2=||G_{uy}x(0)||_2.
	\end{equation}
	Let $\bar{J}_2=J_2(\hat{x}(0),-K\hat{x})$ be the optimal performance for the $\hat{x}$ dynamics in (\ref{xhat}), where $\hat{x}(0)=x(0)$. $\bar{J}$ is finite since $K$ is the optimal control gain of problem (\ref{xhat}). The theorem below analyzes the performance $J_2$ of the heterogeneous MAS (\ref{het x}) with controller $u=-Kx$.
	\begin{theorem}\label{th J2<=barJ2}
		Consider the optimal control of the heterogeneous MAS (\ref{het x}) with the cost function $J_2(x(0),u)$ in (\ref{J_2}). Suppose that $G_1$ and $G_2$ are simultaneously block-diagonalizable. By implementing the optimal control gain learned by Algorithm \ref{alg:2}, it holds that
		\begin{equation}\label{J_2}
		J_2\leq \bar{J}_2+\alpha\epsilon,
		\end{equation}
		where $\epsilon=||G_\Sigma(s)||_2$ and $\alpha=||G_{ey}(I-G_\Sigma G_{eu})^{-1}||_2||G_{\delta u}||_2$.
		Specifically, when the MAS is homogeneous, we have $\epsilon=0$.
	\end{theorem}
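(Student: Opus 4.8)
The plan is to set up the same $\Sigma$–$\Delta$ interconnection used in the proof of Theorem \ref{th G(s) stable}, but now carry the exogenous input $x(0)$ through the loop and bound the output $y$ in the $L_2$ sense. First I would recall that the closed-loop heterogeneous system under $u=-Kx$ decomposes into the homogeneous nominal part (whose transfer behavior from the impulsive input $x(0)\delta(t)$ to $y$ is exactly what defines $\bar J_2$) plus a perturbation channel driven by $\tilde{\mathcal A},\tilde{\mathcal B}$. Concretely, writing $\dot x = \hat{\mathcal A}x + e + \mathcal B^h u'$-type relations as in \eqref{Sigma}–\eqref{Delta}, the signal $y$ can be split as $y = y_{\mathrm{nom}} + y_{\mathrm{pert}}$, where $y_{\mathrm{nom}}$ is the response of the nominal $\hat x$-dynamics to $x(0)$ (so $\|y_{\mathrm{nom}}\|_2 = \bar J_2$) and $y_{\mathrm{pert}}$ is generated by the error signal $e = G_\Sigma(s)$-output circulating through the feedback interconnection.

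The key steps, in order: (1) express $y_{\mathrm{pert}}$ as a chain of transfer functions — the input $x(0)$ enters through $G_{\delta u}$ (the map from the impulsive excitation to $u$), the perturbation block contributes $G_\Sigma$, and the resulting disturbance is fed back through the loop, whose closed-loop gain is $(I - G_\Sigma G_{eu})^{-1}$, and finally mapped to the output via $G_{ey}$; assembling these gives $y_{\mathrm{pert}} = G_{ey}(I - G_\Sigma G_{eu})^{-1} G_\Sigma G_{\delta u} x(0)$ (modulo exact bookkeeping of which loop signal each factor acts on). (2) Apply the triangle inequality $\|y\|_2 \le \|y_{\mathrm{nom}}\|_2 + \|y_{\mathrm{pert}}\|_2 = \bar J_2 + \|y_{\mathrm{pert}}\|_2$. (3) Bound $\|y_{\mathrm{pert}}\|_2$ submultiplicatively: pull out the $\epsilon = \|G_\Sigma(s)\|_2$ factor and collect the remaining operator norms into $\alpha = \|G_{ey}(I - G_\Sigma G_{eu})^{-1}\|_2 \|G_{\delta u}\|_2$, yielding $\|y_{\mathrm{pert}}\|_2 \le \alpha\epsilon$ (using $\|x(0)\|$ absorbed appropriately, or interpreting the norms as induced norms on the relevant spaces). (4) Finally observe that in the homogeneous case $\tilde{\mathcal A}=0$, $\tilde{\mathcal B}=0$, so $G_\Sigma(s) = (\tilde A - \tilde B K)G(s) = 0$ and hence $\epsilon = 0$, recovering $J_2 \le \bar J_2$ — consistent with the exactness of the decomposition established in Theorem \ref{th SD}.

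The main obstacle I anticipate is step (1): getting the loop algebra exactly right so that the three factors $G_{ey}$, $(I-G_\Sigma G_{eu})^{-1}$, $G_{\delta u}$ compose in precisely the order claimed, and justifying that the interconnection is well-posed (i.e. $(I - G_\Sigma G_{eu})$ is invertible on $L_2$, which needs the small-gain-type condition of Theorem \ref{th G(s) stable} or at least internal stability of the interconnection). One must also be careful that $\bar J_2$ is genuinely the $L_2$ norm of the nominal output and not merely a bound on it — this follows because $K$ is the \emph{optimal} gain for \eqref{xhat} and $\hat x(0)=x(0)$, as noted just before the theorem. Once the signal-flow decomposition and well-posedness are pinned down, the remaining inequalities are routine submultiplicativity of the $\mathcal H_2$ / induced norms.
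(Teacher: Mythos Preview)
Your proposal is correct and follows essentially the same route as the paper: decompose the closed loop into a nominal $\hat{\mathcal A}$-driven system excited by the impulsive input $x(0)\delta$ and an exogenous disturbance $e=(\tilde{\mathcal A}-\tilde{\mathcal B}K)x$, identify the nominal output norm with $\bar J_2$, close the loop in the frequency domain to obtain $y = G_{\delta y} + G_{ey}(I-G_\Sigma G_{eu})^{-1}G_\Sigma G_{\delta u}$, and then apply the triangle inequality and submultiplicativity. The paper resolves the bookkeeping you flag in step~(1) by writing the nominal $\eta$-system with the two inputs $(\delta,e)$ and two outputs $(y,u)$ as a single $2\times 2$ transfer matrix and then eliminating $e=G_\Sigma u$ algebraically; with that device the factor ordering falls out directly, and your anticipated obstacle disappears.
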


	\begin{proof}
		We treat $e=(\tilde{A}-\tilde{B}K)x$ as the disturbance, and consider the following system:
		\begin{equation}
		\begin{split}
		\dot{\eta}&=\hat{\mathcal{A}}\eta+e+x(0)\delta,\\
		u&=-K\eta,\\
		y&=\begin{pmatrix}
		\sqrt{Q}\\
		-\sqrt{R}K
		\end{pmatrix}\eta.
		\end{split}
		\end{equation}
		In the frequency domain, we have
		\begin{equation}
		\begin{pmatrix}
		y\\
		u
		\end{pmatrix}=\begin{pmatrix}
		G_{\delta y}& G_{ey}\\
		G_{\delta u}& G_{eu}
		\end{pmatrix}\begin{pmatrix}
		\delta\\e
		\end{pmatrix}.
		\end{equation}
		It follows that
		\begin{equation}
		\begin{split}
		&J_2=||y||_2\\
		&=||G_{\delta y}+G_{ey}(I-G_\Sigma G_{eu})^{-1}G_\Sigma G_{\delta u}||_2\\
		&\leq ||G_{\delta y}||_2+||G_{ey}(I-G_\Sigma G_{eu})^{-1}G_\Sigma G_{\delta u}||_2,
		\end{split}
		\end{equation}
		where $G_{\Sigma}$ is in form (\ref{GSigma}). We note that 
		\begin{equation}
		\begin{split}
		||G_{\delta y}||_2&=\left|\left|\begin{pmatrix}
		\sqrt{Q}\\
		-\sqrt{R}K
		\end{pmatrix}(sI_{nN}-\hat{\mathcal{A}})^{-1}x(0)\right|\right|_2\\
		&=\bar{J}_2.\\
		\end{split}
		\end{equation}
		Then we obtain (\ref{J_2}).
	\end{proof}

	Theorem \ref{th J2<=barJ2} establishes an upper bound on $J_2$, which is associated with $||G_\Sigma(s)||_2$. The inequality (\ref{J_2}) implies that when $||G_\Sigma(s)||_2$ is small enough, then the performance with respect to controller $u=-Kx$ will be close to $\bar{J}_2$.

	\section{Numerical Examples}\label{sec: simulation}	
	We validate the proposed controller using two examples of MAS, one homogeneous and one heterogeneous.  When Algorithm \ref{alg:2} is applied, the dynamics of the agents are always considered to be unknown.  

	Consider a MAS with $N=100$ agents. Each agent $i$ is a second-order dynamic system with $$A_i=\begin{pmatrix}
	0&I_2\\
	0&-\frac{c_i}{m_i}I_2
	\end{pmatrix}, ~B_i=\begin{pmatrix}
	0\\
	\frac{1}{m_i}I_2
	\end{pmatrix}, ~i=1,...,N.$$ Let $\mathcal{G}$ be a connected undirected graph with 100 nodes randomly generated in MATLAB, as shown in Fig \ref{fig network}. Matrices $Q$ and $R$ are set as $Q=(0.5I_N+L)\otimes I_n$ and $R=I_{mN}$, where $L$ is the Laplacian matrix of graph $\mathcal{G}$. In this case, $G_1=0.5I_N+L$, $G_2=I_N$, $Q_0=I_n$ and $R_0=I_m$. This formulation can describe formation control in \cite{Borrelli08,TCNS}. Since $G_1$ and $G_2$ commute, according to Theorem \ref{th commute}, the problem is completely decomposable. Matrix $T$ is constructed by collecting the orthonormal eigenvectors of $G_1$. 
	
	In both examples, we use $w\in\mathbb{R}^4$ as the common initial state for all the agents, where $w_i$ is randomly chosen from a $[0,1]$, $i=1,...,4$. Therefore, the initial state for the overall MAS is $x(0)=\mathbf{1}_N\otimes w$. Let $t_{opt}$ be the computational time for solving the model-based problem (\ref{original}) via MATLAB, $K_{opt}$ be the obtained optimal control gain, and $J_{opt}$ be the corresponding performance. Let $t_{HRL}$ denote the computational time for implementing Algorithm \ref{alg:2} to solve the model-free problem, $K_{HRL}$ be the obtained control gain, and $J_{HRL}$ be the corresponding performance. Also let $t_{RL}$ be the time of solving the model-free problem via the conventional RL algorithm in \cite{Jiang12}.
	
	\begin{example}
		We first consider a homogeneous MAS where $c_i=m_i=1$ for $i=1,\cdots,N$. The simulation results are shown in Table \ref{tab1}, where ``- -" implies the computational time is more than one hour. This example shows that when the homogeneous MAS has a large size, the computational speed of Algorithm \ref{alg:2} can be even higher than that for the conventional LQR control with known $\mathcal{A}$ and $\mathcal{B}$, and the obtained control gain is almost optimal.
		
		\begin{table}[htbp]	
			\centering
			\fontsize{7}{7}\selectfont
			\begin{threeparttable}
				\caption{Algorithms Comparison: Homogeneous MAS}
				\label{tab1}
				\begin{tabular}{cccccc}
					\toprule			
					    $t_{opt}$(s)&$t_{HRL}$(s)&$t_{RL}$(s)&$J_{opt}$&$J_{HRL}$&$||K_{opt}-K||$\\
					\midrule
					 4.8635 & \textbf{0.1234} & - - & 523.7658 & \textbf{523.7662}&0.0037\\
					\bottomrule
				\end{tabular}
			\end{threeparttable}
		\end{table}
		
		\begin{figure}
			\centering
			\includegraphics[width=8cm]{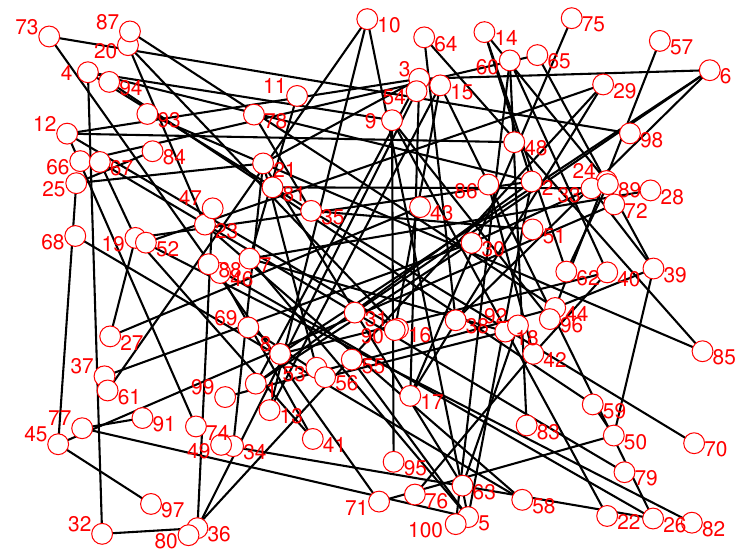}
			\caption{A graph describing the sparsity pattern of $G_1$.} \label{fig network}
		\end{figure}
		\vspace{-0.3cm}
	\end{example}
	
	\begin{example}
		We next consider a heterogeneous MAS, where $c_i=c+\alpha_i$, $m_i=m+\beta_i$, $c=m=1$, $\alpha_i$ and $\beta_i$ are randomly chosen from $[-0.5,0.5]$. The simulation results are shown in Table \ref{tab2}. Note that $\frac{J_{HRL}-J_{opt}}{J_{opt}}=1.94\%$, which implies that Algorithm \ref{alg:2} has a strong robustness on the system performance for this example. 
		\begin{table}[htbp]	
			\centering
			\fontsize{7}{7}\selectfont
			\begin{threeparttable}
				\caption{Algorithms Comparison: Heterogeneous MAS}
				\label{tab2}
				\begin{tabular}{cccccc}
					\toprule			
					    $t_{opt}$(s)&$t_{HRL}$(s)&$t_{RL}$(s)&$J_{opt}$&$J_{HRL}$&$||K_{opt}-K||$\\
					\midrule
					 7.0832 & \textbf{0.1184} & - - & 538.7285 & \textbf{549.1773}&0.8241\\
					\bottomrule
				\end{tabular}
			\end{threeparttable}
		\end{table}
	\end{example}

	
	\section{Conclusions}\label{sec: conclusion}
	We introduced the notion of ``decomposability" for LQR control of homogeneous linear MAS using which a large LQR design problem can be equivalently transformed to multiple smaller-size LQR design problems.  Subsequently, a parallel RL algorithm was proposed to solve these small-size designs and synthesize the optimal controller in a model-free way. Robustness analysis was established, followed by simulation results that clearly show that our parallel RL strategy is much faster than conventional RL.


\begin{thebibliography}{99}
		\bibitem{arxiv} G. Jing, H. Bai, J. George and A. Chakrabortty, ``Decomposability and parallel computation of multi-Agent LQR," {\it arXiv preprint arXiv: 2010.08615}, 2020.
		\bibitem{Borrelli08} F. Borrelli, and T. Keviczky, ``Distributed LQR design for identical dynamically decoupled systems," {\it IEEE Transactions on Automatic Control}, vol. 53, no. 8, pp. 1901-1912, 2008.
		\bibitem{ferari} G. Foderaro, S. Ferrari, and T. A. Wettergren, ``Distributed Optimal Control
		for Multi-Agent Trajectory Optimization," {\it Automatica}, vol. 50, no. 1, pp.
		149-154, 2014.
		\bibitem{Lewis13} F. L. Lewis, H. Zhang, K. Hengster-Movric and A. Das, ``Cooperative control of multi-agent systems: optimal and adaptive design approaches," Springer Science \& Business Media, 2013.
		\bibitem{Movric13} K. H. Movric, and F. L. Lewis,  ``Cooperative optimal control for multi-agent systems on directed graph topologies," {\it IEEE Transactions on Automatic Control}, vol. 59, no. 3, pp. 769-774, 2013.
		\bibitem{Nguyen16} D. H. Nguyen, T. Narikiyo, M. Kawanishi, and S. Hara, ``Hierarchical decentralized robust optimal design for homogeneous linear multi-agent systems," {\it ArXiv 1607.01848}, 2016.
		\bibitem{Xue16} N. Xue, and A. Chakrabortty, ``Optimal control of large-scale networks using clustering based projections," {\it arXiv 1609.05265}, 2016.
		\bibitem{Vrabie09} D. Vrabie, O. Pastravanu,  M. AbuKhalaf, and F. L. Lewis, ``Adaptive optimal control for continuous-time linear systems based on policy iteration," \emph{Automatica}, vol. 45, no. 2, pp. 477-484, 2009.
		\bibitem{Jiang12} Y. Jiang and Z. P. Jiang, ``Computational adaptive optimal control for continuous-time linear systems with completely unknown dynamics," \emph{Automatica}, vol. 48, no. 10, pp. 2699-2704, 2012.
		\bibitem{Lewis12} F. L. Lewis, D. Vrabie, and K. G. Vamvoudakis, ``Reinforcement learning and feedback control: Using natural decision methods to design optimal adaptive controllers," {\it IEEE Control Systems Magazine}, vol. 32, no. 6, pp. 76-105, 2012.
		\bibitem{Mukherjee18} S. Mukherjee, H. Bai, and A. Chakrabortty, ``On model-free reinforcement learning of reduced-order optimal control for singularly perturbed systems," \emph{IEEE Conference on Decision and Control}, 2018.
		\bibitem{Sadamoto19} T. Sadamoto, A. Chakrabortty, and J. I. Imura, ``Fast online reinforcement learning control using state-space dimensionality reduction," {\it IEEE Transactions on Control of Network Systems}, 2020.
		\bibitem{TCNS} G. Jing, H. Bai, J. George and A. Chakrabortty, ``Model-free optimal control of linear multi-agent systems via decomposition and hierarchical approximation," {\it arXiv 2008.06604}, 2020.	
		\bibitem{Jing20} G. Jing, H. Bai, J. George and A. Chakrabortty, ``Model-free reinforcement learning of minimal-cost variance control," {\it IEEE Control Systems Letters}, vol. 4, no. 4, pp. 916-921, 2020.
		\bibitem{Horn12} R.A. Horn, and C.R. Johnson, ``Matrix analysis," {\it Cambridge University press}, 2012.
		\bibitem{Zhou98} K. Zhou, and J. C. Doyle, ``Essentials of robust control," Vol. 104. {\it Upper Saddle River}, NJ: Prentice hall, 1998.
		
		
		
	\end{thebibliography}
\end{document}